	\newtheorem{lemma}{Lemma}
	\newtheorem{theorem}{Theorem}
	\newtheorem{remark}{Remark}
\def\BibTeX{{\rm B\kern-.05em{\sc i\kern-.025em b}\kern-.08em
    T\kern-.1667em\lower.7ex\hbox{E}\kern-.125emX}}
\begin{document}

\title{Optimizing Information Freshness for Cooperative IoT Systems with Stochastic Arrivals
}


\author{\IEEEauthorblockN{Bohai Li, Qian Wang, He Chen, Yong Zhou, and Yonghui Li}
	\thanks{The work of H. Chen is supported by the CUHK direct grant under the project code 4055126.}
	\thanks{B. Li and Q. Wang are with School of Electrical
		and Information Engineering, The University of Sydney, Sydney, NSW
		2006, Australia and Department of Information Engineering, The Chinese University of Hong Kong, Hong Kong SAR, China. The work was initiated when B. Li was a visiting student at CUHK (email: \{bohai.li, qian.wang2\}@sydney.edu.au).}%
	\thanks{H. Chen is with Department of Information Engineering, The Chinese University of Hong Kong, Hong Kong SAR, China (email: he.chen@ie.cuhk.edu.hk).}%
	\thanks{Y. Zhou is with School of Information Science and Technology, ShanghaiTech University, Shanghai, 201210, China (email: zhouyong@shanghaitech.edu.cn).}
	\thanks{Y. Li is with School of Electrical
		and Information Engineering, The University of Sydney, Sydney, NSW
		2006, Australia (email: yonghui.li@sydney.edu.au).}
}


\maketitle

\begin{abstract} This paper considers a cooperative Internet of Things (IoT) system with a source aiming to transmit randomly generated status updates to a designated destination as timely as possible under the help of a relay. We adopt a recently proposed concept, the age of information (AoI), to characterize the timeliness of the status updates. In the considered system, delivering the status updates via the one-hop direct link will have a shorter transmission time at the cost of incurring a higher error probability, while the delivery of status updates through the two-hop relay link could be more reliable at the cost of suffering longer transmission time. Thus, it is important to design the relaying protocol of the considered system for optimizing the information freshness. 
Considering the limited capabilities of IoT devices, we propose two low-complexity  age-oriented  relaying (AoR) protocols, i.e., the source-prioritized AoR (SP-AoR) protocol and the relay-prioritized AoR (RP-AoR) protocol, to reduce the AoI of the considered system.  Specifically, in the SP-AoR protocol, the relay opportunistically replaces the source to retransmit the successfully received status updates that have not been correctly delivered to the destination, but the retransmission at the relay can be preempted by the arrival of a new status update at the source. Differently, in the RP-AoR protocol, once the relay replaces the source to retransmit the status updates that have not been successfully received by the destination, the retransmission at the relay will not be preempted by new status update arrivals at the source. By carefully analyzing the evolution of the instantaneous AoI, we derive closed-form expressions of the average AoI for both proposed AoR protocols. We further optimize the generation probability of the status updates at the source in both protocols. Simulation results validate our theoretical analysis, and demonstrate that the two proposed protocols outperform each other under various system parameters. Moreover, the protocol with better performance can achieve near-optimal performance compared with the optimal scheduling policy attained by applying the Markov decision process (MDP) tool.


\begin{IEEEkeywords}
		Information freshness, Age of Information, Internet of Things,  cooperative communications, status updates. 
\end{IEEEkeywords}

	

\end{abstract}

\section{Introduction}
With the rapid development of Internet of Things (IoT), timely status updates have become increasingly critical in many emerging IoT applications, such as wireless industrial automation,  autonomous vehicles, and healthcare monitoring \cite{b11}, \cite{b10}. In fact, the conventional performance metrics, e.g., throughput and delay, cannot adequately characterize the timeliness of the status updates \cite{b12}. For example, throughput can be maximized by generating and transmitting the status updates as frequent as possible. However, excessive update rates may lead to severe network congestion, which makes the status updates suffer from long transmission and queuing delays. Such long delays can be reduced by lowering the update rate. On the other hand, if the update rate is reduced too much, the monitor will receive undesired outdated status updates. Motivated by these facts, the age of information (AoI), defined as the time elapsed since the generation of the latest received status update, has been recently introduced to quantify the information freshness from the perspective of the receiver that monitors a remote process \cite{b14}. Unlike the conventional performance metrics, the AoI is related to both the transmission delay and the update generation rate \cite{b13}. As a result, the AoI has attracted increasing attention as a more comprehensive evaluation criterion for information freshness.

\subsection{Background}
Since the AoI concept was first proposed to characterize the information freshness in a vehicular status update system \cite{b1}, extensive studies focusing on the analysis and optimization of the AoI have appeared. Most existing work was concerned with the AoI performance of single-hop wireless networks \cite{b15, b16, b17, b18, b20, b21, b2, b19, b22, b23,b24,b25,b26,b27,b28,b29,b31, b30,b32,b33}. On the other hand,
the AoI performance of multi-hop networks has also been studied in \cite{ b34,b35, b36,  b37, b38,b39}. References \cite{ b34,b35, b36,  b37} focused on the multi-hop networks with a single source. The authors in \cite{b34} considered a general multi-hop network, where a single source disseminates status updates through a gateway to the whole network. They proved that the preemptive Last Generated First Served (LGFS) policy is age-optimal among all causal policies when the packet transmission time is exponentially distributed, and for arbitrary general distribution of packet transmission time, the non-preemptive LGFS policy minimizes the age among all non-preemptive work-conserving policies. In \cite{b35}, the authors considered a simple three-node relay network where the relay
not only forwards packets generated by another stream, but also needs to transmit its own age-sensitive packets.
Under this scenario, a closed-form expression of the average AoI of the relay's packets was provided by leveraging specific queuing theory tools. Reference \cite{b36} characterized the average AoI at the input and output of each node in a line network, where a source delivers status updates to a destination monitor through a series of relay nodes. In \cite{b37}, both the optimal offline and online scheduling policies were proposed to minimize the AoI of a two-hop energy harvesting network. Both \cite{b38} and \cite{b39} focused on the AoI in multi-source, multi-monitor, and multi-hop networks. In \cite{b38}, the AoI of the considered multi-hop networks was studied under general interference constraints, and the optimal stationary policy minimizing the AoI in the studied system was also derived. The authors in \cite{b39} studied the multi-hop networks from a \textit{global} perspective in the sense that every node in the network is both a source and a monitor, and derived the lower bounds for peak and average AoI. An algorithm generating near-optimal periodic status update schedules was derived in \cite{b39}.

%

\subsection{Motivation and Contributions}
All the aforementioned work on multi-hop networks overlooked the direct link between the source and destination. Therefore, the updates from the source can only be transmitted to the destination via the relay. As far as we know, although leveraging the direct link between the source and destination can potentially enhance the information freshness, no existing work has designed the relaying strategy and analyzed the average AoI of a cooperative IoT  system with the existence of a direct link. Such a design is indeed non-trivial. This is because delivering the status updates via the one-hop direct link takes a shorter transmission time at the cost of incurring a potentially higher error probability, while the delivery of status updates through the two-hop relay link could be more reliable at the cost of suffering longer transmission time. By considering the AoI at all three nodes, the optimal design can be attained by applying the Markov design process (MDP) tool. However, considering the limited capabilities of IoT devices \cite{b40}, the 
optimal policy of the MDP problem with three-dimensional state space may not be suitable for most practical IoT systems since it generally has a complex multi-threshold structure. Besides, in practical IoT system design, the closed-form expression of the average AoI is crucial to quickly verify whether the required AoI performance can be guaranteed with a given set of system parameters. The performance of the MDP policy is normally hard to analyze, which makes the verification difficult to be implemented. In this context, two natural questions arise: \textit{For the considered system, are there any policies with simple structures that can achieve near-optimal AoI performance? What is the average AoI performance of such simple policies?} To the best of the authors' knowledge, these questions have not been answered in the literature. Motivated by this gap, in this paper, we investigate a three-node cooperative IoT system, in which a source aims to timely report randomly generated status updates to its destination with the help of a relay. With the existence of a direct link between the source and destination, the transmission of status updates can either go through the one-hop direct link or the two-hop relay link. The goal of this paper is to design, analyze, and optimize simple policies, which can achieve near-optimal AoI performance of the considered system. The main contributions of this paper are summarized as follows:

\begin{itemize}
	\item We first propose two age-oriented relaying (AoR) protocols with simple structures from the perspective of minimizing the AoI in the considered system, namely the source-prioritized AoR (SP-AoR) protocol and the relay-prioritized AoR (RP-AoR) protocol. To ensure that the latest status updates generated at the source can be timely transmitted, we propose the SP-AoR protocol, in which the relay opportunistically replaces the source to retransmit the successfully received status updates that have not been correctly delivered to the destination, but the retransmission at the relay can be preempted by the arrival of a new status update at the source. However, we find that the SP-AoR protocol may not perform well in reducing the AoI in some cases, e.g., the direct link of the considered system suffers from severe channel fading while the generation of new status updates at the source is frequent. Inspired by such cases, we then propose the RP-AoR protocol, in which the retransmission from the relay will not be preempted by a new status update arrival at the source. The two proposed protocols can complement each other in performance under various system parameters.

	\item Based on the AoI evolution process, we define some necessary time intervals to mathematically express the average AoI of both proposed protocols. By representing these time-interval definitions in terms of key system parameters, including the generation probability of the status updates and the transmission success probabilities of three links, we then attain closed-form expressions of the average AoI for both proposed AoR protocols. Note that the analysis is non-trivial compared to the point-to-point case because the AoI of the one-hop direct link and the two-hop relay link is coupled together. In particular, the analysis of the RP-AoR protocol is rather complicated since the possible system state is not unique when a status update is received by the destination through the relay link. Specifically, at the moment of the reception, the considered system may have two possible states: $(a)$ a fresher status update has been generated at the source; $(b)$ there are no fresher status updates at the source. This makes some defined time intervals not independent of each other, resulting in a challenging analysis. 

		
		\item We further minimize the average AoI of both proposed protocols by optimizing the status generation probability at the source. Based on the optimization results, we find that generating status updates as frequently as possible can minimize the average AoI of the RP-AoR protocol. However, in the SP-AoR protocol, generating status updates too frequently in turn increases the average AoI in some cases, e.g., when the channel between the source and destination suffers from severe fading. Simulation results are then provided to validate the theoretical analysis, and demonstrate that the proposed protocols can outperform each other under various  system parameters. Note that based on the analytical results, we can quickly determine which protocol to apply in the considered system to achieve better AoI performance. It is also shown that the protocol with better performance can achieve near-optimal performance compared with the optimal scheduling policy attained by the MDP tool.
		

			
		

\end{itemize}

\subsection{Organization}
The rest of the paper is organized as follows. Section \uppercase\expandafter{\romannumeral2} introduces the system model. We propose, analyze, and optimize the SP-AoR protocol in Section \uppercase\expandafter{\romannumeral3}. Section \uppercase\expandafter{\romannumeral4} provides the design, analysis, and optimization of the RP-AoR protocol. Numerical results are presented in Section \uppercase\expandafter{\romannumeral5} to validate the theoretical analysis and the effectiveness of the proposed AoR protocols. Finally, conclusions are drawn in Section \uppercase\expandafter{\romannumeral6}.

\section{System Model and AoI Definition}
In this section, we describe the system model and present the AoI evolution at each node in the considered system.

\subsection{System Description}
Consider  a three-node cooperative IoT system, in which a source node ($S$) aims to transmit randomly generated status updates to a destination node ($D$) as timely as possible with the help of a relay node ($R$). With the existence of a direct link between $S$ and $D$, the transmission of the status updates can either go through the $S$-$D$ link or the $S$-$R$-$D$ link. We assume that all nodes are equipped with a single antenna and work in the half-duplex mode. Time is divided into slots of equal durations, and the transmission of each status update takes exactly one time slot. We assume that all channels suffer from block fading, i.e., the channel responses remain invariant within one time slot but vary independently from one time slot to another. To quantify the timeliness of the status updates, we adopt a recently proposed AoI metric, first coined in \cite{b1}.

At the beginning of each time slot, the considered cooperative system can have three transmission operations:  $(a)$ $\mathbf{O}_{\mathbf{S}}$: $S$ broadcasts a status update to both $R$ and $D$; $(b)$  $\mathbf{O}_{\mathbf{R}}$: $R$ forwards a status update to $D$; $(c)$ $\mathbf{O}_{\mathbf{N}}$: Neither $S$ nor $R$ transmits a status update. In operation $\mathbf{O}_{\mathbf{S}}$, if $D$ successfully receives the status update from $S$, an acknowledgement (ACK) is sent back to $S$ at the end of the current time slot. If $R$ successfully receives the status update from $S$, in addition to feeding back an ACK to $S$ at the end of the current time slot, it also stores the received status update in its buffer for possible operation $\mathbf{O}_{\mathbf{R}}$ in the following time slot(s). For implementation simplicity, we assume that both $S$ and $R$ can only store one status update, which is discarded when a fresher status update is received successfully. Note that in age-oriented systems, it is meaningless to store stale status updates when a fresher status update is obtained. In operation $\mathbf{O}_{\mathbf{R}}$, $R$ forwards the stored status update to $D$, and $D$ transmits an ACK to $S$ at the end of the current time slot if it successfully receives the status update. The ACK link from both $R$ and $D$ to $S$ is considered to be error-free and delay-free.


\subsection{Age of Information}

Taking into account channel fading, we define $P_{1}$, $P_{2}$, and $P_{3}$ to denote the transmission success probabilities through the $S$-$D$ link, $S$-$R$ link, and $R$-$D$ link, respectively. We follow \cite{b2} and adopt a Bernoulli process to model the stochastic generation of status updates at $S$. Specifically, a new status update is generated with probability $p$ at the beginning of each time slot. Note that the stochastic generation of status updates at $S$ can also be considered as a transmission process through a virtual link with a transmission time of 0. Specifically, we can consider that a transmitter node generates a status update at the beginning of each time slot,  and immediately transmits the status update to $S$ through the virtual link with a transmission success probability of $p$.
 Denote by $U_i(t)$, $i \in \{S, R, D\}$, the generation time of the most recent status update at the receiver side $i$ of a transmission link until time slot $t$. The AoI at receiver $i$ in time slot $t$ can then be defined as 
\begin{equation}
	\Delta_{i}(t)=t-U_{i}(t).
\end{equation}


As the receiver of the virtual link, if $S$ successfully receives the status update from the transmitter node, its local AoI will decrease to 0, otherwise its local AoI will increase by 1. Mathematically, the AoI evolution at $S$ is given by
\begin{equation}\label{aois}
	\Delta_{S}(t+1)=\left\{
	\begin{array}{rcl}
	\Delta_{S}(t)+1,    &      & g(t+1)=0,\\
	0,       &      & g(t+1)=1,
	\end{array} \right.
\end{equation}
where $g(t+1)$ denotes the indicator that is equal to 1 when $S$ successfully receives a status update in the virtual transmission (i.e., a status update is generated at $S$) at the beginning of time slot $t+1$, and $g(t+1)=0$ otherwise. As the receiver of the $S$-$R$ link, if $R$ successfully receives the status update from $S$, it will update the local AoI to match that at $S$, otherwise its local AoI will increase by 1. Mathematically, the AoI evolution at $R$ can be expressed as
\begin{equation}
\Delta_{R}(t+1)=\left\{
\begin{array}{rcl}
\Delta_{S}(t)+1,    &      & w_S(t)=1\ \text{and}\ r_R(t)=1,\\
\Delta_{R}(t)+1,       &      & \text{otherwise},
\end{array} \right.
\end{equation}
where $w_S(t)=1$ denotes that the cooperative system chooses transmission operation $\mathbf{O}_{\mathbf{S}}$ (i.e., $S$ broadcasts a status update to both $R$ and $D$) in time slot $t$, and $r_R(t)=1$ denotes that $R$ successfully receives the status update from $S$ in time slot $t$. Similarly, we can also express the AoI evolution at $D$ as
\begin{equation}\label{aoid}
\Delta_{D}(t+1)=\left\{
\begin{array}{rcl}
\Delta_{S}(t)+1,    &      & w_S(t)=1\ \text{and}\ r_D(t)=1,\\
\Delta_{R}(t)+1,    &      &   w_R(t)=1,\ r_D(t)=1,\\
& &\ \text{and}\ \Delta_{R}(t) < \Delta_{D}(t),\\
\Delta_{D}(t)+1,       &      & \text{otherwise},
\end{array} \right.
\end{equation}
where $w_R(t)=1$ denotes that the cooperative system transmits with operation $\mathbf{O}_{\mathbf{R}}$ (i.e., $R$ forwards the stored status update to $D$) in time slot $t$, and $r_D(t)=1$ denotes that $D$ successfully receives a status update in time slot $t$. The first case in \eqref{aoid} corresponds to the case where $S$ broadcasts a status update and $D$ successfully receives it. In this case, the AoI at $D$ will be updated to match that at $S$. The second case in \eqref{aoid} corresponds to the case where the considered cooperative system transmits with operation $\mathbf{O}_{\mathbf{R}}$ and the stored status update at $R$ is successfully forwarded to $D$. In this case, the AoI at $D$ will be updated to match that at $R$ if the status update from $R$ is fresher. When no update is received at $D$ or the update is staler than the current status at $D$, i.e., the third case in \eqref{aoid}, the AoI will simply increase by 1. Note that, in the considered system, we focus on the long-term average AoI at $D$, which is given by
\begin{equation}
\bar{\Delta}_{D}=\lim\limits_{T \to \infty}\sup\frac{1}{T}\sum_{t=1}^{T}\Delta_{D}(t).
\end{equation}

 
With the help of the ACK, $S$ can have a table of the AoI evolution at each node in the considered system. As such, we assume that the nodes in the system are scheduled in a centralized manner and $S$ serves as the system coordinator. Specifically, based on the ACK received at the end of the current time slot, $S$ has the knowledge of the local AoI at $R$ and $D$ in the next time slot. At the beginning of the next time slot, after updating its own local AoI in the table, $S$ will determine the transmission operation of the system and immediately notify $R$ of the decision. For simplicity, we ignore the time for updating and making decisions at $S$, and the time for notifying $R$ of the decisions.

Note that, in the considered system, we always have  $\Delta_{S}(t)\leq\Delta_{R}(t)$ and $\Delta_{S}(t)\leq \Delta_{D}(t)$. This is because a new status update is first generated at $S$, and then broadcast to $R$ and $D$. Thus, the status updates at $R$ and $D$ cannot be fresher than that at $S$. It is obvious that, if there is $\Delta_{S}(t)=\Delta_{D}(t)$ in the AoI table, i.e., there are no new status updates to be transmitted in the system, $S$ coordinates the system to select $\mathbf{O}_{\mathbf{N}}$. Otherwise, either $\mathbf{O}_{\mathbf{S}}$ or $\mathbf{O}_{\mathbf{R}}$ is chosen to transmit the new status update to reduce the AoI at $D$. Specifically, if there is $\Delta_{S}(t)=\Delta_{R}(t)<\Delta_{D}(t)$ in the table, i.e., $S$ and $R$ have the same new status update to transmit, $S$ coordinates the system to transmit with $\mathbf{O}_{\mathbf{R}}$. Note that the channel gain of the $R$-$D$ link is generally better than that of the $S$-$D$ link. Retransmitting the same status update by $R$ instead of $S$ has a greater probability to reduce the AoI. If there is $\Delta_{S}(t)<\Delta_{R}(t)$ and $\Delta_{R}(t)>\Delta_{D}(t)$ in the table, the system obviously chooses $\mathbf{O}_{\mathbf{S}}$ for transmission. However, if there is $\Delta_{S}(t)<\Delta_{R}(t)<\Delta_{D}(t)$, one natural question arises: \textit{Which of $\mathbf{O}_{\mathbf{S}}$ and $\mathbf{O}_{\mathbf{R}}$ should be chosen to reduce the AoI at $D$? Operating in  $\mathbf{O}_{\mathbf{S}}$ could lead to a fresher status update while operating in  $\mathbf{O}_{\mathbf{R}}$ results in a higher transmission success probability.} The optimal answer to the question can be obtained with the help of the MDP tool. For the MDP problem with three-dimensional state space, the optimal policy generally has a complex multi-threshold structure, which makes the performance of the policy hard to analyze. This difficulty also exists in our considered three-node cooperative IoT system \cite{b42}. Considering the limited capabilities of IoT devices and the requirements for average AoI performance in certain practical IoT applications, we propose two low-complexity protocols for  the considered system and derive their average AoI performance in closed form in the following sections.

\vspace{-0.12cm}
\section{Source-Prioritized AoR Protocols}\label{3}

In this section, we propose a low-complexity SP-AoR protocol to reduce the AoI of the considered system. By analyzing the evolution of the AoI, we manage to attain the closed-form expression of the average AoI for the protocol. Given reasonable value sets of $P_1$, $P_2$, and $P_3$, we further minimize the average AoI by optimizing the status generation probability $p$. 
\vspace{-0.12cm}
\subsection{Protocol Description}

In conventional cooperation protocols, which are designed mainly from a physical layer perspective, dedicated channel resources are normally allocated for $R$ to facilitate the cooperation \cite{b5}. However, if the transmission success probability of the $S$-$D$ link is relatively large, it is apparent that transmitting more status updates through the $S$-$D$ link reduces the AoI. Therefore, conventional cooperation protocols may not perform well from the perspective of minimizing the AoI. We note that a network-level cooperation protocol, where $R$ utilizes the silence periods of $S$ terminals to execute cooperation, was proposed in \cite{b6} to improve the cooperative system performance. Inspired by such a protocol, to ensure that fresh status updates can be transmitted in a timely manner in the SP-AoR protocol, we enforce $S$ to preempt the retransmission from $R$ in case of a new status update arrival. Hence, even if $R$ has a status update to forward at the beginning of one time slot, it still needs to wait for the instruction from $S$. If $S$ has a new status update to transmit, it informs $R$ that the system transmits with $\mathbf{O}_{\mathbf{S}}$ in the current time slot. $R$ then discards its current status update and attempts to decode the new one sent by $S$. Otherwise, $S$ coordinates the system to keep retransmitting the update at $R$ with $\mathbf{O}_{\mathbf{R}}$ until the successful reception at $D$ or the preemption by $S$. It is worth mentioning that the design in \cite{b6} focused on the throughput maximization rather than the AoI minimization. To the best of the authors' knowledge, the SP-AoR protocol is the first effort towards the design and analysis of the considered cooperative system from the perspective of the AoI.
\vspace{-0.13cm}
\subsection{Analysis of Average AoI}\label{3B}
For the ease of understanding the AoI evolution of the SP-AoR protocol, we illustrate an example staircase path for 10 consecutive time slots with an initial value of one in Fig. 1. We denote by $t_k$ the generation time of the $k$th status update received by $D$ and denote by $t_k'$ its arrival time at $D$. Moreover, we denote by $t_*$ the generation time of the discarded status updates in the considered system. 


\begin{figure}
	\centering
	\includegraphics[width=0.9\linewidth]{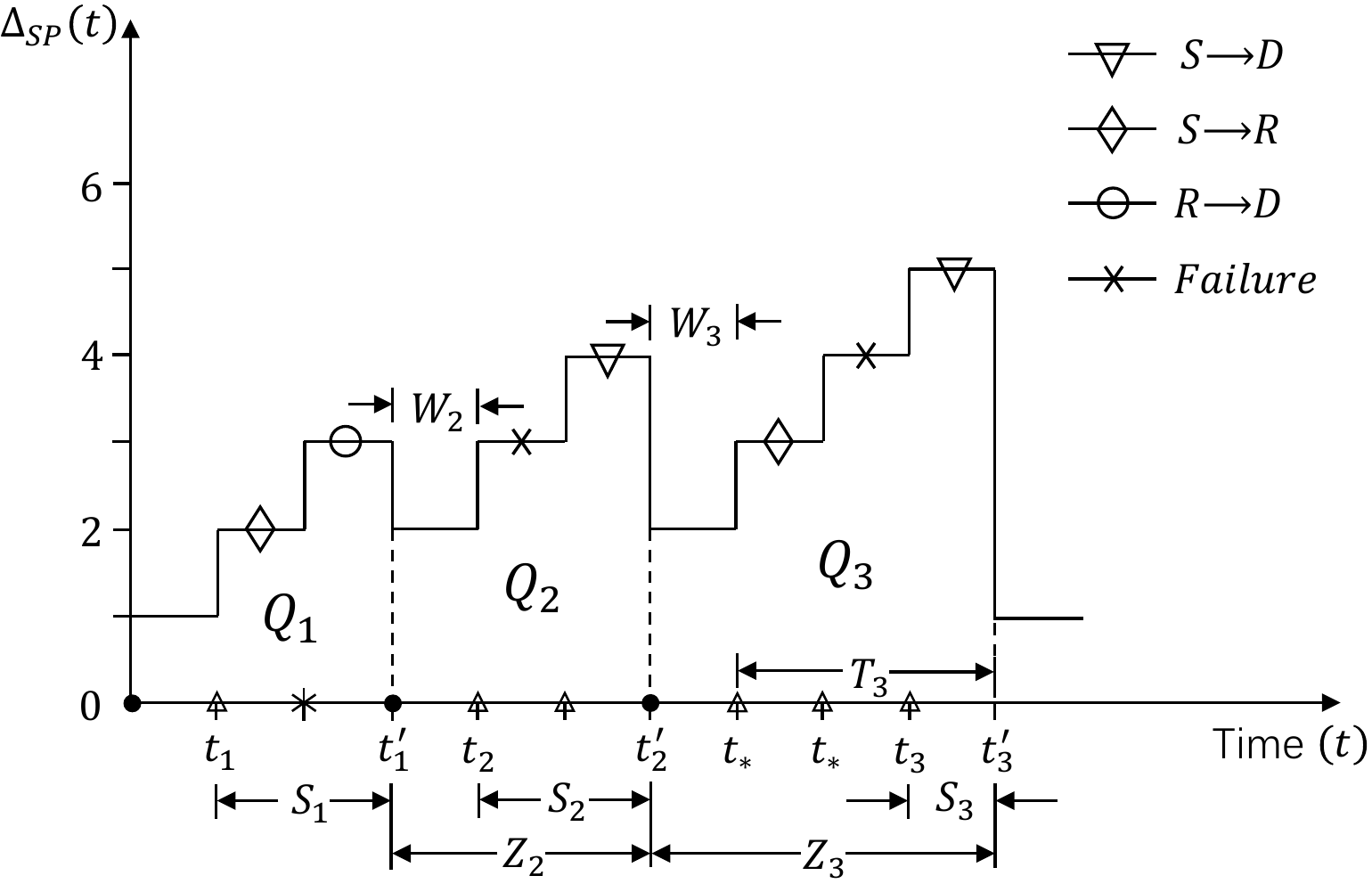}
	\vspace{-0.1cm}
	\caption{Sample staircase path of the AoI evolution for the SP-AoR protocol. We use $S\!\rightarrow\! D$, $S\!\rightarrow\! R$, $R\!\rightarrow\! D$, and Failure to denote the successful transmission through the $S$-$D$ link, the successful transmission through the $S$-$R$ link, the successful transmission through the $R$-$D$ link, and the failed transmission, respectively. Moreover, we use $\bullet$, {\scriptsize{$\triangle$}}, and $*$ to denote the events that none of the links are active, the $S$-$R$ link and the $S$-$D$ link are active, and the $R$-$D$ link is active, respectively.}
	\label{fig:aoisp}
	\vspace{-0.4cm}
\end{figure}

Note that we also define some time intervals in Fig. 1 to facilitate the calculation of the AoI. We define $S_k$ as the service time of the $k$th status update received by $D$, which is given by $S_k=t_k'-t_k$. $T_k$ is defined as the time duration between the generation time of the first status update after $t'_{k-1}$ and the arrival time of the $k$th received update at $D$ (i.e., $t'_k$). In addition, the interdeparture time between two consecutive arrivals of status updates at $D$ is given by $Z_k=t'_k-t'_{k-1}$ and $Z_k=T_k+W_k$, with $W_k$ being the waiting time from the arrival of the $(k-1)$th received update at $D$ (i.e., $t'_{k-1}$) to the generation of the first status update at $S$ after $t'_{k-1}$.


According to the renewal process theory \cite{b43}, the observed time slots $[1,T]$ can be viewed as $X_T$ renewal periods. As depicted in Fig. 1, we denote by $Q_k$ the area under the AoI curve of the $k$th renewal period. Therefore, the average AoI of the SP-AoR protocol can be expressed by
\begin{equation}\label{aoisp1}
	\bar{\Delta}_{SP}=\lim\limits_{T \to \infty }{\frac{X_{T}}{T}\frac{1}{X_{T}}\sum_{k=1}^{X_{T}}Q_{k}}=\frac{\mathbb{E}[Q_{k}]}{\mathbb{E}[Z_{k}]}.
\end{equation}For further simplification, we represent $Q_{k}$ in terms of the time intervals defined above and it can be expressed as
\begin{equation}\label{qksp}
\begin{aligned}
Q_{k}&=S_{k-1}+(S_{k-1}+1)+\cdots+(S_{k-1}+Z_{k}-1)\\
&=S_{k-1}Z_{k}+\frac{Z_{k}^2-Z_{k}}{2}.
\end{aligned}
\end{equation}

\noindent By substituting \eqref{qksp} into \eqref{aoisp1}, we can express the average AoI as 	
\begin{equation}\label{aoisp2}
\bar{\Delta}_{SP}=\frac{\mathbb{E}[S_{k-1}Z_{k}]}{\mathbb{E}[Z_{k}]}+\frac{\mathbb{E}\big[Z_k^2\big]}{2\mathbb{E}[Z_{k}]}-\frac{1}{2}.
\end{equation}
In order to further simplify \eqref{aoisp2}, we have the following lemma.

\begin{lemma} \label{lemma1}
	{In the SP-AoR protocol, the service time of the $(k-1)$th   received update at $D$ is independent of the interdeparture time between the $(k-1)$th and the $k$th received update at $D$ such that
		\begin{equation}
		\mathbb{E}[S_{k-1}Z_{k}]=\mathbb{E}[S_{k-1}]\cdot \mathbb{E}[Z_{k}].
		\end{equation}}
\end{lemma}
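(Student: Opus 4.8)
The plan is to exploit the renewal/reset structure that the preemption rule of the SP-AoR protocol imposes at each delivery instant, combined with the slot-wise independence of the channel and generation processes. The starting point is a clean partition of the time axis at $t_{k-1}'$: the service time $S_{k-1}=t_{k-1}'-t_{k-1}$ is a deterministic function of the generation events and channel realizations in the slots up to $t_{k-1}'$, whereas the interdeparture time $Z_k=W_k+T_k$ is, by its very definition, a function only of what occurs strictly after $t_{k-1}'$. Since the block-fading channels vary independently from slot to slot and the updates are generated by an i.i.d. Bernoulli process, the randomness driving $S_{k-1}$ and that driving $Z_k$ are independent — \emph{provided} the system configuration handed over at $t_{k-1}'$ does not itself couple the two intervals. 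Establishing that it does not is the real content of the lemma.

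First I would pin down this handover configuration. Because a successful delivery updates $\Delta_D$ to the freshest update being transmitted, and because in SP-AoR the relay is allowed to forward only when $S$ holds nothing newer (the coordinator picks $\mathbf{O}_{\mathbf{R}}$ only in the state $\Delta_S=\Delta_R<\Delta_D$), one checks that immediately after $t_{k-1}'$ we always have $\Delta_S=\Delta_D$. Hence the coordinator selects $\mathbf{O}_{\mathbf{N}}$ and the system idles; during this idle phase $W_k$ is simply the geometric waiting time to the next generation, governed solely by the Bernoulli process after $t_{k-1}'$, and nothing else happens.

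The crux — and the step I expect to be the main obstacle — is to argue that the relay buffer content at $t_{k-1}'$ is irrelevant to $Z_k$. A priori $\Delta_R$ at $t_{k-1}'$ is random: it depends on whether the $(k-1)$th update reached $D$ through the $S$-$D$ link or the $R$-$D$ link, and hence on the past, so if it influenced $T_k$ it would create dependence with $S_{k-1}$. Here the SP-AoR preemption rule is decisive. Throughout the idle phase the relay holds nothing fresher than $D$ (since $\Delta_R\ge\Delta_S=\Delta_D$), so its stale content cannot be used; and the instant the next update is generated, the coordinator invokes $\mathbf{O}_{\mathbf{S}}$, the relay discards whatever it held, and the service of the $k$th delivered update starts from a fixed, non-random configuration. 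Thus $T_k$ is determined entirely by the channel and generation realizations after that generation instant, and the stale relay state is washed out.

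Collecting these observations, $Z_k$ is a deterministic function of events occurring strictly after $t_{k-1}'$, launched from a configuration that is fixed rather than inherited from the history, while $S_{k-1}$ is a function of events up to $t_{k-1}'$; by the per-slot independence of the generation and channel processes the two are statistically independent, whence $\mathbb{E}[S_{k-1}Z_k]=\mathbb{E}[S_{k-1}]\,\mathbb{E}[Z_k]$. I would close by remarking that this is exactly the property that fails for the RP-AoR protocol, where the non-preemptive relay carries its buffer across the delivery and thereby couples the service and interdeparture intervals — which is why its analysis, as noted in the introduction, is considerably harder.
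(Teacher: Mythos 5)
Your proof is correct, and it reaches the conclusion by a somewhat different and more careful route than the paper. The paper splits $Z_k=W_k+T_k$ and treats the two pieces separately: it declares $S_{k-1}$ independent of $T_k$ ``since the two random variables are defined for different updates,'' and independent of $W_k$ because preemption makes $W_k$ purely geometric with parameter $p$. You instead make a single past/future cut at $t_{k-1}'$ and prove that the handover state there is deterministic: after every delivery in SP-AoR the system is empty ($\Delta_S=\Delta_D$, and $\Delta_R\ge\Delta_S$ so the relay holds nothing fresher than $D$), and at the next generation the coordinator invokes $\mathbf{O}_{\mathbf{S}}$ and the relay discards its buffer, so $Z_k$ is a functional of post-delivery slot randomness launched from a fixed configuration; slot-wise independence of the channel and Bernoulli processes then yields $\mathbb{E}[S_{k-1}Z_k]=\mathbb{E}[S_{k-1}]\cdot\mathbb{E}[Z_k]$. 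What your version buys is precisely the step the paper leaves implicit: ``defined for different updates'' is not by itself a valid reason for independence --- in the RP-AoR protocol $S_{k-1}$ and $Z_k$ also concern different updates, yet they are dependent exactly because the non-preempted relay buffer and the empty/non-empty system state survive the delivery instant, which is why that protocol requires the conditional analysis of Section IV. By explicitly showing that SP-AoR's preemption washes out the relay state, you isolate the true mechanism behind the lemma; the paper's proof is shorter but tacitly rests on this same fact.
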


\begin{proof}
	As we have $Z_k=W_k+T_k$, Lemma \ref{lemma1} can be proved if the service time of the $(k-1)$th received update at $D$, i.e, $S_{k-1}$,  is independent of both $W_k$ and $T_k$. It is straightforward to find that $S_{k-1}$ is independent of $T_k$ since the two random variables are defined for different updates. Recall that in the SP-AoR protocol, $S$ can preempt the transmission of $R$ when there is a new status update arrival. This indicates that the waiting time $W_k$ only depends on the generation probability $p$. This completes the proof.
\end{proof}

\noindent By applying Lemma 1, the expression of the average AoI can be simplified as
\vspace{-0.1cm}
\begin{equation}\label{aoisp3}
\bar{\Delta}_{SP}=\mathbb{E}[S_{k-1}]+\frac{\mathbb{E}\big[Z_{k}^2\big]}{2\mathbb{E}[Z_{k}]}-\frac{1}{2}.
\end{equation}
To obtain the average AoI of the SP-AoR protocol, we now derive the terms $\mathbb{E}[S_{k-1}]$, $\mathbb{E}[Z_{k}]$ and $\mathbb{E}\big[Z_{k}^2\big]$ one by one in the following. 

\subsubsection{First Moment of the Service Time $\mathbb{E}[S_{k-1}]$}


In the SP-AoR protocol, there are two types of status updates that can be successfully received by $D$ without being preempted, i.e., the updates successfully received by $D$ through the direct link and the updates successfully delivered to $D$ through the two-hop relay link.
Therefore, $\mathbb{E}[S_{k-1}]$ is a weighted sum of the average service time of these two types. For the first type of successful updates, the probability that an update is successfully received after $l$ times of transmissions by $S$ can be expressed as  
\vspace{-0.05cm}	
\begin{equation}\label{PL}
P_{l} = (1-p)^{l-1}(1-P_{2})^{l-1}(1-P_{1})^{l-1}P_{1}.	
\vspace{-0.05cm}
\end{equation}
The corresponding first moment of the service time can be calculated by
\begin{equation}\label{EL}
		\mathbb{E}[l|\mathfrak{D}]=\frac{\sum_{l=1}^{\infty}P_{l}\cdot l}{\sum_{l=1}^{\infty}P_{l}},
\end{equation}
where event $\mathfrak{D}$ denotes that the status updates successfully received by $D$ are transmitted through the direct link without preemption. Similarly, for the second type of successful status updates, the first moment of the service time can be expressed as
\vspace{-0.05cm}
\begin{equation}\label{EMN}
		\mathbb{E}[m+n|\mathfrak{R}]=\frac{\sum_{n=1}^{\infty}\sum_{m=1}^{\infty}P_{mn}\cdot(m+n)}{\sum_{n=1}^{\infty}\sum_{m=1}^{\infty}P_{mn}},
\end{equation} 
\noindent where event $\mathfrak{R}$ denotes that the status updates successfully received by $D$ are transmitted through the two-hop relay link without preemption. $m$ and $n$ denote the number of transmission times via the $S$-$R$ link and the $R$-$D$ link, respectively. $P_{mn}$ is the probability that an update is successfully received by $D$ after transmitting $(m+n)$ times, which can be expressed as
\vspace{-0.025cm}
\begin{equation}\label{PMN}
\resizebox{.98\hsize}{!}{$P_{mn} \!=\! (1-p)^{m-1}(1-P_{2})^{m-1}(1-P_{1})^{m}P_{2}(1-p)^{n}(1-P_{3})^{n-1}P_{3}$}.
\end{equation}
Since the above two types of updates make up all the updates that can be received by $D$ without being preempted, the first moment of the service time for the SP-AoR protocol can be evaluated as 
\begin{equation}\label{Ssp1}
\begin{aligned}
\mathbb{E}[S_{k-1}]\!=\!&\ \mathbb{E}[l|\mathfrak{D}]\!\cdot\!\frac{\sum_{l=1}^{\infty}P_{l}}{\sum_{l=1}^{\infty}P_{l}\!+\!\sum_{n=1}^{\infty}\!\sum_{m=1}^{\infty}\!P_{mn}}\\
&\,\!+\! \mathbb{E}[m\!+\!n|\mathfrak{R}]\!\cdot\!\frac{\sum_{n=1}^{\infty}\!\sum_{m=1}^{\infty}\!P_{mn}}{\sum_{l=1}^{\infty}P_{l}\!+\!\sum_{n=1}^{\infty}\!\sum_{m=1}^{\infty}\!P_{mn}},
\end{aligned}
\end{equation}
\noindent which follows according to the law of total probability.

\begin{figure*}
	\begin{subequations}\label{finite}
		\begin{equation}
			\begin{aligned}
				\sum_{k'=1}^{K}aq^{k'-1}=\frac{a(q^K-1)}{q-1},\ {\rm{with}} \ q\neq 1,
			\end{aligned}
		\end{equation}
	and
	\begin{equation}
		\sum_{k'=0}^{K-1}(a+k'r)q^{k'}=\frac{a-[a+(K-1)r]q^K}{1-q}+\frac{rq(1-q^{K-1})}{(1-q)^2},\  {\rm{with}}\ q\neq1 \  {\rm{and}} \  K>1.
	\end{equation}
	\end{subequations}	
\vspace{-0.2cm}
	\noindent\rule[0.05\baselineskip]{\textwidth}{0.5pt}
\end{figure*}

By applying the finite sum equations given in \eqref{finite} at the top of this page \cite[Eqs. (0.112) and (0.113)]{b7}, we can further simplify \eqref{Ssp1} to
\begin{equation}\label{Ssp2}
\mathbb{E}[S_{k-1}]=\frac{1}{1-\beta}+\frac{1}{1-\alpha}\cdot\frac{\gamma}{P_{1}(1-\alpha)+\gamma},
\end{equation}
where $\alpha=(1-p)(1-P_{3})$, $\beta=(1-p)(1-P_{1})(1-P_{2})$ and $\gamma=P_{2}P_{3}(1-p)(1-P_{1})$.

\subsubsection{First Moment of the Interdeparture Time $\mathbb{E}[Z_{k}]$}

\begin{figure*}
	\begin{equation}\label{Tsp1}
	\begin{aligned}
	\mathbb{E}[T_{k}]=&\
	\sum_{l=1}^{\infty}P_{l}\cdot l + \sum_{n=1}^{\infty}\sum_{m=1}^{\infty}P_{mn}\cdot (m+n)
	+ \sum_{l=1}^{\infty}(1-P_{1})^{l}(1-P_{2})^{l}(1-p)^{l-1}p\cdot \big(l+\mathbb{E}\big[T'_{k}\big]\big)\\ &\,+\sum_{n=0}^{\infty}\sum_{m=1}^{\infty}\Big[(1-P_{1})^{m}(1-P_{2})^{m-1}P_{2}(1-p)^{m-1}
	\times(1-p)^{n}(1-P_{3})^{n}p\cdot \big(m+n+\mathbb{E}\big[T'_{k}\big]\big)\!\Big].\\
	\end{aligned}
	\end{equation}
	\vspace{-0.5cm}
	\noindent\rule[0.05\baselineskip]{\textwidth}{0.5pt}
\end{figure*}

As we have $Z_{k}=W_{k}+T_{k}$, the first moment of the interdeparture time can be written as $\mathbb{E}[Z_{k}]=\mathbb{E}[W_{k}]+\mathbb{E}[T_{k}]$. Recall that, in the SP-AoR protocol, the waiting time $W_{k}$ only depends on the generation probability $p$. Since the status updates are generated according to a Bernoulli process, $W_{k}$ follows a geometric distribution with parameter $p$, and its first moment can be readily given by $\mathbb{E}[W_{k}]=(1-p)/p$. We then move on to the calculation of the term $\mathbb{E}[T_{k}]$.

Note that in the SP-AoR protocol, $T_{k}$ behaves differently for the following four possible cases: $(a)$ The update is successfully received by $D$ through the $S$-$D$ link without being preempted; $(b)$ The update is successfully received by $D$ through the $S$-$R$-$D$ link without being preempted; $(c)$ The update is preempted by a new update before being successfully received by either $R$ or $D$, and the new update may be preempted by multiple new updates; $(d)$ The update is preempted by a new update generated at $S$ after being successfully received by $R$, and the new update may be preempted by multiple new updates at $S$. We note that the number of preemptions can approach infinity in the third and fourth cases, which generally makes the first moment of $T_{k}$ difficult to derive mathematically. To tackle this issue, we resort to the recursive method applied in \cite{b8} and evaluate the first moment of $T_{k}$ as \eqref{Tsp1} at the top of the next page.

Note that the four terms on the right hand side of \eqref{Tsp1} correspond to the above four cases, respectively, and $T'_{k}$ is the time duration between the preemption time (i.e., the generation time of the second status update after $t'_{k-1}$) and the arrival time of the $k$th received update at $D$. As $T_{k}$ is defined as the time duration between the generation time of the first status update after $t'_{k-1}$ and the arrival time of the $k$th received update at $D$, following the idea of recursion, we have $\mathbb{E}[T_{k}]=\mathbb{E}\big[T'_{k}\big]$. After some manipulations by applying \cite[Eqs. (0.112) and (0.113)]{b7}, we have
\begin{equation}\label{Tsp2}
\mathbb{E}[T_{k}]=\frac{(1-\alpha)+\beta\cdot P'_{2}}{P_{1}(1-\alpha)+\gamma},
\end{equation}
where $P'_{2} = P_{2}/(1-P_{2})$. 
By combining $\mathbb{E}[W_{k}]=(1-p)/p$ and \eqref{Tsp2}, we now attain a closed-form expression of the term $\mathbb{E}[Z_{k}]$, given by
\begin{equation}\label{Zsp}
\begin{aligned}
\mathbb{E}[Z_{k}]=\mathbb{E}[W_{k}]+\mathbb{E}[T_{k}]=\frac{(1-\alpha)(1-\beta)}{p\big[P_{1}(1-\alpha)+\gamma\big]}.
\end{aligned}
\end{equation}

\subsubsection{Second Moment of the Interdeparture Time $\mathbb{E}\big[Z_{k}^{2}\big]$}The second moment of the interdeparture time $Z_k$ can be expressed as 
\begin{equation}\label{Z^2sp}
\mathbb{E}\big[Z_k^2\big]\!\!=\!\mathbb{E}\big[(W_{k}+T_{k})^2\big]\!\!=\!\mathbb{E}\big[W_{k}^{2}\big]+2\mathbb{E}[W_{k} T_{k}]+\mathbb{E}\big[T_{k}^{2}\big].
\end{equation}
It is readily to find that $W_{k}$ and $T_{k}$ are independent. Thus, we have $\mathbb{E}[W_{k}T_{k}]=\mathbb{E}[W_{k}]\cdot \mathbb{E}[T_{k}]$. As $W_{k}$ follows a geometric distribution with parameter $p$, we have $\mathbb{E}\big[W_{k}^2\big]=(p^2-3p+2)/p^2$. In order to evaluate $\mathbb{E}\big[Z_{k}^2\big]$, the only remaining task is to calculate $\mathbb{E}\big[T_{k}^2\big]$.

\begin{figure*}
	\begin{equation}\label{T^2sp1}
	\resizebox{.99\hsize}{!}{$\begin{aligned}
		\mathbb{E}\big[T_{k}^{2}\big]\!=\!\!&\
		\sum_{l=1}^{\infty}P_{l}\!\cdot\! l^2 \!+\! \sum_{n=1}^{\infty}\!\sum_{m=1}^{\infty}\!P_{mn}\!\cdot\! \big(m^2\!+\!2mn\!+\!n^2\big)
		\!+\!\sum_{l=1}^{\infty}\!\bigg[\!(1\!-\!P_{1})^{l}(1\!-\!P_{2})^{l}(1\!-\!p)^{l-1}p
		\!\cdot\!\Big(l^2\!+\!2l\!\cdot\!\mathbb{E}\big[T_{k}\big]\!+\!\mathbb{E}\big[T_{k}^{2}\big]\Big)\!\bigg]\\
		&\,\!+\! \sum_{n=0}^{\infty}\!\sum_{m=1}^{\infty}\!\bigg[\!(1\!-\!P_{1})^{m}(1\!-\!P_{2})^{m-1}P_{2}(1\!-\!p)^{m-1}
		(1\!-\!p)^{n}(1\!-\!P_{3})^{n}p
		\!\cdot\! \Big(\!m^2\!+\!n^2\!+\!\mathbb{E}\big[T_{k}^{2}\big]\!
		+\!2mn\!+\!2m\!\cdot\!\mathbb{E}[T_{k}]\!+\!2n\!\cdot\!\mathbb{E}[T_{k}]\!\Big)\!\bigg].
		\end{aligned}$}
	\vspace{-0.05cm}
	\end{equation}
	\vspace{-0.45cm}	
	\noindent\rule[0\baselineskip]{\textwidth}{0.4pt}
\end{figure*}

Similar to \eqref{Tsp1}, the second moment of $T_k$ can be evaluated as \eqref{T^2sp1} at the top of the next page. Similar to the process of obtaining \eqref{Tsp2} from \eqref{Tsp1}, we can rewrite $\mathbb{E}\big[T_{k}^2\big]$ as
	\begin{equation}\label{T^2sp2}
\resizebox{.98\hsize}{!}{$\begin{aligned}
	\mathbb{E}\big[T_{k}^2\big]\!= &\,\frac{1}{(1-\alpha)(1-\beta)-(1-\alpha)\beta\cdot p'-\beta\cdot p'\cdot P'_{2}}\\
	&\times\!\frac{1}{(1-\alpha)(1-\beta)}\\
	&\times\!\bigg\{\!(1-\alpha)^2(1+\beta)+(3-\alpha-\beta-\alpha\beta)\beta\cdot P'_{2}\\
	&\qquad\!\!+2\mathbb{E}[T_{k}]\!\cdot\!\Big[(1\!-\!\alpha)^2\beta\!\cdot \!p'\!+\!(1\!-\!\alpha\beta)\beta\!\cdot\! p'\!\cdot\! P'_{2} \Big]\!\bigg\},
	\end{aligned}$}
\end{equation}
where $p'=p/(1-p)$.



\subsubsection{Average AoI}
By substituting the terms derived in \eqref{Ssp2}, \eqref{Tsp2}, \eqref{Zsp}, \eqref{Z^2sp} and \eqref{T^2sp2} into \eqref{aoisp3}, the exact closed-form expression of the average AoI for the SP-AoR protocol can be presented in the following theorem.
\begin{theorem}\label{Theorem 1}
	The average AoI of the SP-AoR protocol is given by
	\begin{equation}
	\bar{\Delta}_{SP}\!=\!\frac{\big[1\!-\!(1\!-\!p)(1\!-\!P_{3})\big]\!\cdot\!\big[1\!-\!(1\!-\!p)(1\!-\!P_{1})(1\!-\!P_{2})\big]}{p\!\cdot\!\big[pP_{1}\!+\!(1\!-\!p)P_{3}\!-\!(1\!-\!p)(1\!-\!P_{1})(1\!-\!P_{2})P_{3}\big]}.
	\end{equation}
\end{theorem}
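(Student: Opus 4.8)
The plan is to prove the theorem by direct substitution of the moments already computed into the reduced average-AoI expression \eqref{aoisp3}, followed by algebraic simplification; all the genuinely probabilistic work is complete. Equation \eqref{aoisp3} gives $\bar\Delta_{SP}=\mathbb{E}[S_{k-1}]+\mathbb{E}\big[Z_k^2\big]/(2\mathbb{E}[Z_k])-1/2$, and every ingredient is on hand in closed form: $\mathbb{E}[S_{k-1}]$ from \eqref{Ssp2}, $\mathbb{E}[Z_k]$ from \eqref{Zsp}, and $\mathbb{E}\big[Z_k^2\big]$ assembled through \eqref{Z^2sp} from $\mathbb{E}[W_k]=(1-p)/p$, $\mathbb{E}\big[W_k^2\big]=(p^2-3p+2)/p^2$, $\mathbb{E}[T_k]$ in \eqref{Tsp2}, and $\mathbb{E}\big[T_k^2\big]$ in \eqref{T^2sp2}. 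Thus the only remaining task is to substitute these into \eqref{aoisp3}, clear denominators, and collapse the result to the stated two-factor quotient.

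Before grinding, I would reduce the bookkeeping by inspecting the target in the $\alpha,\beta,\gamma$ notation. Using $1-\alpha=p+(1-p)P_3$, one checks that the denominator bracket obeys $P_1(1-\alpha)+\gamma=pP_1+(1-p)P_3\big[P_1+P_2-P_1P_2\big]=pP_1+(1-p)P_3-(1-p)(1-P_1)(1-P_2)P_3$, while the two numerator factors are precisely $1-\alpha$ and $1-\beta$. Hence the claimed right-hand side equals $(1-\alpha)(1-\beta)/\{p[P_1(1-\alpha)+\gamma]\}$, which is exactly $\mathbb{E}[Z_k]$ from \eqref{Zsp}. The theorem is therefore equivalent to the single moment identity
\[
\mathbb{E}[S_{k-1}]+\frac{\mathbb{E}\big[Z_k^2\big]}{2\mathbb{E}[Z_k]}-\frac12=\mathbb{E}[Z_k],
\]
equivalently $\mathbb{E}\big[Z_k^2\big]=2\mathbb{E}[Z_k]^2-(2\mathbb{E}[S_{k-1}]-1)\mathbb{E}[Z_k]$. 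Checking this polynomial identity is much cleaner than adding three nested fractions, since it only requires matching $\mathbb{E}\big[Z_k^2\big]$ against a product of already-factored quantities.

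I would then verify the identity in the $\alpha,\beta$ variables. The left-hand side $\mathbb{E}\big[Z_k^2\big]$ is built from \eqref{Z^2sp} by inserting $\mathbb{E}\big[W_k^2\big]$, the cross term $2\mathbb{E}[W_k]\mathbb{E}[T_k]$ (valid since $W_k\perp T_k$, as noted below \eqref{Z^2sp}), and \eqref{T^2sp2}, over the common denominator $(1-\alpha)(1-\beta)\big[(1-\alpha)(1-\beta)-(1-\alpha)\beta p'-\beta p'P'_2\big]$ inherited from \eqref{T^2sp2}. The right-hand side $2\mathbb{E}[Z_k]^2-(2\mathbb{E}[S_{k-1}]-1)\mathbb{E}[Z_k]$ is formed directly from \eqref{Ssp2} and \eqref{Zsp}. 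Both sides are rational functions of $p,P_1,P_2,P_3$; after canceling the shared factor $1/p^2$ and the common $P_1(1-\alpha)+\gamma$ occurrences, the equality reduces to a polynomial identity confirmed by expansion.

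The main obstacle is exactly this final expansion, in particular handling \eqref{T^2sp2}: it is a doubly nested fraction whose inner denominator carries the combination $(1-\alpha)(1-\beta)-(1-\alpha)\beta p'-\beta p'P'_2$ and whose numerator contains an explicit $\mathbb{E}[T_k]$ term, so one must substitute \eqref{Tsp2} and clear that inner denominator before the outer cancellation becomes visible. The accounting is heavy because $p'=p/(1-p)$ and $P'_2=P_2/(1-P_2)$ reintroduce the factors $(1-p)$ and $(1-P_2)$ buried inside $\alpha,\beta,\gamma$. Keeping every quantity as a polynomial in $p,P_1,P_2,P_3$ during the last step, and exploiting the prior observation that the answer must collapse to $\mathbb{E}[Z_k]$, is what keeps the computation tractable and guards against sign errors.
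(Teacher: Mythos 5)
Your proposal is correct and takes essentially the same route as the paper: the paper's own proof of Theorem \ref{Theorem 1} is exactly the substitution of the derived moments \eqref{Ssp2}, \eqref{Tsp2}, \eqref{Zsp}, \eqref{Z^2sp}, and \eqref{T^2sp2} into \eqref{aoisp3} followed by algebraic simplification. Your preliminary observation that the stated right-hand side equals $\mathbb{E}[Z_k]$ from \eqref{Zsp}, so that the theorem collapses to the single moment identity $\mathbb{E}\big[Z_k^2\big]=2\mathbb{E}[Z_k]^2-(2\mathbb{E}[S_{k-1}]-1)\mathbb{E}[Z_k]$, is a sound and genuinely cleaner way to organize that same algebra, but it is a bookkeeping refinement rather than a different argument.
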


\subsection{Optimization of Average AoI}
Since the channel gain of the relay links is generally better than that of the direct link in cooperative communication systems, we assume that $0<P_{1}<P_{2}<1$ and $0<P_{1}<P_{3}<1$ in the optimization. We then acquire the optimal generation probability $p$ that minimizes the average AoI given by the following theorem.
\begin{theorem} \label{Theorem 2}
	{The optimal generation probability that minimizes the average AoI of the SP-AoR protocol is given by
		\begin{equation}\label{psp}
		p^*_{SP}\!=\!\left\{
		\begin{aligned}
		&\frac{-\!\lambda\!+\!\sqrt{\lambda^2\!-\!4\mu \xi}}{2\mu},\\
		&\quad  \mathrm{if} \ 
		0\!<\!P_{1}\!<\!\frac{P_{2}\!+\!P_{2}P_{3}\!-\!\sqrt{(P_{2}\!-\!P_{2}P_{3})^2\!+\!4P_{2}P_{3}}}{2(P_{2}\!-\!1)},\\
		&1, \\
		&\quad  \mathrm{if} \ 
		\frac{P_{2}\!+\!P_{2}P_{3}\!-\!\sqrt{(P_{2}\!-\!P_{2}P_{3})^2\!+\!4P_{2}P_{3}}}{2(P_{2}\!-\!1)}\!<\!P_{1}\!<\!1,
		\end{aligned} \right.
		\end{equation} 
		\noindent where $\lambda\!=\!-2P_{3}(P_{1}\!+\!P_{2}\!-\!P_{1}P_{2})(P_{1}\!-\!P_{1}P_{3}\!-\!P_{2}P_{3}\!+\!P_{1}P_{2}P_{3})$, $\mu\!=\!P_{2}P_{3}(1\!-\!P_{2}P_{3})\!-\!P_{1}^2(1\!-\!P_{2})(1\!-\!P_{3})^2\!-\!P_{1}P_{2}P_{3}^2(1\!-\!P_{2})(2\!-\!P_{1})\!-\!P_{1}P_{2}(1\!-\!P_{3})$ and $\lambda^2\!-\!4\mu \xi\!=\!4P_{2}P_{3}^2(P_{3}\!-\!P_{1})(1\!-\!P_{1})(P_{1}\!+\!P_{2}\!-\!P_{1}P_{2})^2$.}
\end{theorem}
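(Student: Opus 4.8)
The statement is a single-variable optimization of the closed form in Theorem~\ref{Theorem 1} over $p\in(0,1]$, and my plan is a standard first-derivative analysis, with the two branches of \eqref{psp} arising from whether the stationary point lands inside the interval or is pushed to the endpoint $p=1$. Write $\bar{\Delta}_{SP}=A(p)/B(p)$ with
\begin{equation*}
A(p)=\big[1-(1-p)(1-P_{3})\big]\big[1-(1-p)(1-P_{1})(1-P_{2})\big],
\end{equation*}
\begin{equation*}
B(p)=p\big[pP_{1}+(1-p)P_{3}-(1-p)(1-P_{1})(1-P_{2})P_{3}\big].
\end{equation*}
First I would record the framing facts: $B(p)>0$ on $(0,1]$, $\bar{\Delta}_{SP}\to+\infty$ as $p\to0^{+}$ (since $B\to0$ while $A\to P_{3}(P_{1}+P_{2}-P_{1}P_{2})>0$), and $\bar{\Delta}_{SP}(1)=1/P_{1}$. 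Hence a minimizer exists and is either an interior stationary point or $p=1$.

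The core computation is to differentiate. Both $A$ and $B$ are quadratic in $p$, so the sign of $\bar{\Delta}_{SP}'(p)$ equals that of $\Phi(p):=A'(p)B(p)-A(p)B'(p)$, and the crucial simplification is that the cubic terms of $A'B$ and $AB'$ cancel, leaving a quadratic $\Phi(p)=\mu p^{2}+\lambda p+\xi$. Because $B$ has no constant term, reading off the low-order coefficients gives at once $\xi=-A(0)B'(0)=-P_{3}^{2}(P_{1}+P_{2}-P_{1}P_{2})^{2}<0$ together with the stated $\lambda$, while collecting the middle terms yields the stated $\mu$. I would then verify the clean factorization of the discriminant: after extracting the common factor $4P_{3}^{2}(P_{1}+P_{2}-P_{1}P_{2})^{2}$, it remains to check the identity that the residual factor equals $P_{2}(P_{3}-P_{1})(1-P_{1})$, which is strictly positive under $P_{1}<P_{3}<1$; this both reproduces the displayed $\lambda^{2}-4\mu\xi$ and shows that $\Phi$ has two distinct real roots.

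With $\Phi$ quadratic and $\Phi(0)=\xi<0$, the minimization reduces to a sign chase. Since $\bar{\Delta}_{SP}$ is decreasing near $0$, the only candidate interior minimizer is the point where $\Phi$ turns from negative to positive, i.e. the root $\frac{-\lambda+\sqrt{\lambda^{2}-4\mu\xi}}{2\mu}$, the remaining root being either negative or a maximizer. Whether this candidate lies in $(0,1)$ is decided by the sign of $\Phi$ at the endpoint, and a short evaluation gives $\Phi(1)=P_{2}P_{3}-P_{1}P_{2}(1+P_{3})-P_{1}^{2}(1-P_{2})$. When $\Phi(1)>0$ the sign change occurs inside $(0,1)$, so the interior root is optimal; when $\Phi(1)\le0$ the derivative stays negative on $(0,1]$ and the optimum is $p=1$. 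Viewing $\Phi(1)$ as a downward parabola in $P_{1}$ that equals $P_{2}P_{3}>0$ at $P_{1}=0$ and $-1$ at $P_{1}=1$, its unique positive root lies in $(0,1)$; solving $\Phi(1)=0$ for $P_{1}$ and keeping that root gives precisely the threshold separating the two branches of \eqref{psp}, with small $P_{1}$ in the interior regime and large $P_{1}$ in the endpoint regime.

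I expect the algebra to be the main obstacle on two fronts. First, expanding $A'B-AB'$ to pin down $\mu$ and, above all, collapsing $\lambda^{2}-4\mu\xi$ to the product form demand careful cancellation. Second, one must confirm the \emph{global} character of the minimizer, i.e. rule out a spurious positive hump of $\Phi$ on $(0,1)$ that would create a second stationary point. Writing $\kappa=P_{1}-(P_{1}+P_{2}-P_{1}P_{2})P_{3}$ for the leading coefficient of $B$, so that $\mu=P_{2}(P_{3}-P_{1})(1-P_{1})-\kappa^{2}$ and $\lambda=-2P_{3}(P_{1}+P_{2}-P_{1}P_{2})\kappa$, I would dispatch this by observing that such a hump would require $\mu<0$ together with $\kappa<0$, whereas $\kappa<0$ already forces $P_{3}-P_{1}$ large enough to keep $\mu>0$; hence $\Phi$ changes sign at most once on $(0,\infty)$ and the clean dichotomy above holds, completing the identification of $p^{*}_{SP}$.
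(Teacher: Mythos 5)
Your route is the same as the paper's (Appendix~A): reduce everything to the sign of the quadratic numerator of $\partial\bar\Delta_{SP}/\partial p$ (your $\Phi(p)=\mu p^2+\lambda p+\xi$, the paper's $\kappa(p)$), observe that $\xi<0$ and $\lambda^2-4\mu\xi>0$ under $0<P_1<P_2<1$, $0<P_1<P_3<1$, conclude that the minimizer is the larger root when it falls in $(0,1)$ and $p=1$ otherwise, and translate that dichotomy into the stated threshold on $P_1$. Several of your bookkeeping choices are in fact cleaner than the paper's: reading off $\xi=\Phi(0)=-A(0)B'(0)=-P_3^2(P_1+P_2-P_1P_2)^2$ directly; parametrizing by the leading coefficient $\kappa=P_1-P_3(P_1+P_2-P_1P_2)$ of $B$, so that $\lambda=-2P_3(P_1+P_2-P_1P_2)\kappa$ and $\mu=P_2(P_3-P_1)(1-P_1)-\kappa^2$; and deciding the branch by the sign of $\Phi(1)$ rather than by locating the root $x_2$ relative to $1$ as the paper does (these are equivalent, and your $\Phi(1)=P_2(P_3-P_1)(1-P_1)-P_1^2$ does reproduce the theorem's threshold on $P_1$).

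The one genuine gap is precisely the step you dispatch in a single clause: ``$\kappa<0$ already forces $P_3-P_1$ large enough to keep $\mu>0$.'' This implication is the crux of the whole proof, and it is where the paper spends most of its appendix (its Case~4): there, $-\mu$ is rewritten as a quadratic in $P_1$, the signs of its coefficients are checked, it is evaluated at $P_1=1$ and at $P_1=P_2P_3/(1-P_3+P_2P_3)$ (the point where $\kappa=0$), and a shape argument shows $-\mu<0$ whenever $P_1<P_2P_3/(1-P_3+P_2P_3)$. As written, your proposal asserts the implication without an argument; if it were false, the ``hump'' configuration ($\mu<0$, both roots of $\Phi$ in $(0,\infty)$) would survive, $\Phi$ could change sign twice on $(0,1)$, and the two-branch formula \eqref{psp} would require comparing the value at an interior stationary point against the value at $p=1$, which your dichotomy does not do. The claim is true, and your own parametrization yields a three-line fix: set $a=P_1(1-P_3)$ and $b=P_2P_3(1-P_1)$, so $\kappa=a-b$; if $\kappa<0$ then $0<b-a<b$ and, since $P_2\le 1$ gives $b-a\le P_3(1-P_1)-P_1(1-P_3)=P_3-P_1$,
\[
\kappa^2=(b-a)^2<b\,(b-a)\le P_2P_3(1-P_1)(P_3-P_1)<P_2(P_3-P_1)(1-P_1),
\]
the last step using $P_3<1$; hence $\mu=P_2(P_3-P_1)(1-P_1)-\kappa^2>0$. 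With this inserted, your proof is complete and, in my view, tighter than the paper's four-case treatment.
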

\begin{proof} 
	See Appendix \ref{prof3}.
\end{proof} 

\begin{remark} \label{remark1}
	{For the second case in Theorem \ref{Theorem 2}, the average AoI is minimized by setting $p=1$ and it is given by
		\begin{equation}\label{SP_AoI}
		\bar\Delta_{SP}^{min}=\frac{1}{P_{1}}.
		\end{equation}
		Note that in the considered system, the minimum possible time required for a successful status update transmission through the $S$-$D$ link is one time slot, while that through the $S$-$R$-$D$ link is at least two time slots. Therefore, when the transmission success probability of the $S$-$D$ link (i.e., $P_{1}$) is relatively large, having more status updates successfully delivered through the $S$-$D$ link is beneficial to reduce the average AoI. If the generation probability of the status updates is 1, S always has a new status update to transmit (i.e., the generate-at-will model \cite{b9}). This means that in the SP-AoR protocol, the update received by $R$ in the previous time slot is always preempted. As a result, all the status updates received by $D$ can only come from the $S$-$D$ link, which minimizes the average AoI.
		
		On the other hand, for the first case in Theorem \ref{Theorem 2}, the transmission success probability of the $S$-$D$ link is relatively small compared to the second case. Hence, the updates successfully delivered through the $S$-$R$-$D$ link also have a significant contribution on minimizing the average AoI. As mentioned above, if the generation probability is 1, the updates can only be successfully delivered to $D$ through the $S$-$D$ link. Obviously, this is no longer optimal for minimizing the average AoI in this case, which means that $p=1$ should not be the optimal status generation probability to minimize the average AoI.}
\end{remark}

\vspace{-0.3cm}
\section{Relay-Prioritized AoR Protocol}
In this section, a low-complexity relay-prioritized AoR (RP-AoR) protocol is proposed to reduce the AoI of the considered system. Based on the analysis of the AoI evolution, we derive the closed-form expression of the average AoI for the RP-AoR protocol. The optimal status generation probability $p$ is also analyzed to minimize the average AoI.

\vspace{-0.2cm}
\subsection{Protocol Description}
Based on Theorem \ref{Theorem 2} and Remark \ref{remark1}, if the considered system suffers from severe channel fading on the $S$-$D$ link while the generation of new status updates at $S$ is frequent, the SP-AoR protocol may not have good performance in reducing the AoI. Inspired by this case, we propose the RP-AoR protocol, in which the retransmission from $R$ will not be preempted by a new status update arrival at $S$. Different from the SP-AoR protocol, if $R$ has a status update to forward, $S$ coordinates the system to keep transmitting with $\mathbf{O}_{\mathbf{R}}$ until it receives an ACK from $D$ (i.e., the forwarded update is successfully received at $D$). After the reception of the ACK, if $S$ has a fresher status update than that at $D$, $S$ coordinates the system to transmit with $\mathbf{O}_{\mathbf{S}}$. Otherwise, the considered system is coordinated by $S$ to choose $\mathbf{O}_{\mathbf{N}}$.

\vspace{-0.3cm}
\subsection{Analysis of Average AoI}
For the ease of understanding the AoI evolution of the RP-AoR protocol, we illustrate an example staircase path for 14 consecutive time slots with an initial value of one in Fig. 2. Note that $t_k$, $t_k'$, $t_*$, $S_k$, $T_k$, $Z_k$ and $W_k$ have the same definitions as those in Fig. 1. To facilitate the calculation of the AoI for the RP-AoR protocol, we further denote by $H_k$ the time that the $k$th status update received by $D$ waits in the system before being served, and denote by $Y_k$ the total time that the $k$th update received by $D$ spends in the system. For convenience, we refer to the system where the local AoI at neither $S$ nor $R$ is lower than that at $D$ (i.e., neither $S$ nor $R$ has a fresher status update than that at $D$ to transmit) as an empty system in the rest of this section.
	


\begin{figure}
	\centering
	\includegraphics[width=0.9\linewidth]{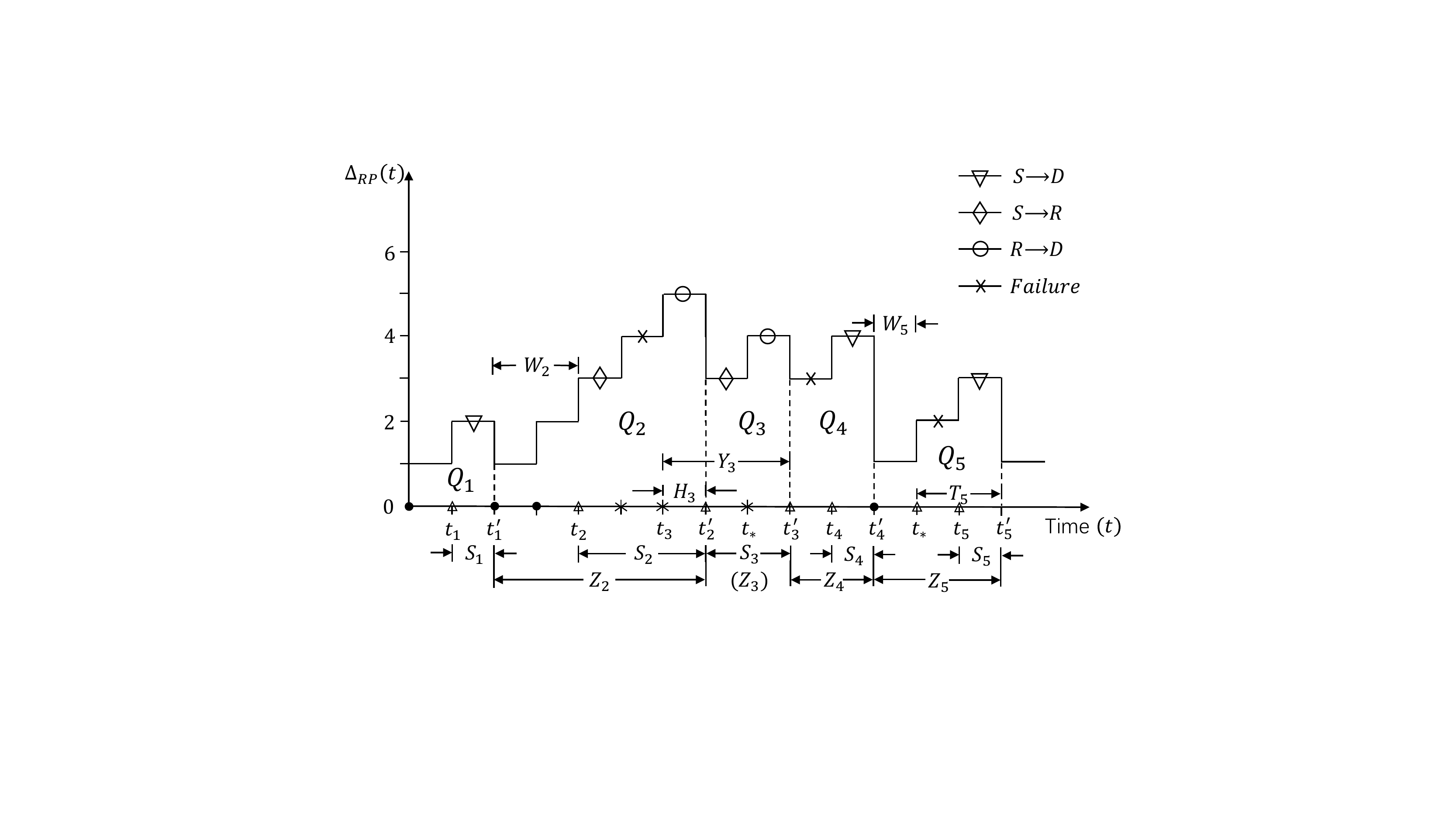}
	\caption{Sample staircase path of the AoI evolution for the RP-AoR protocol. We use $S\!\rightarrow\! D$, $S\!\rightarrow\! R$, $R\!\rightarrow\! D$, and Failure to denote the successful transmission through the $S$-$D$ link, the successful transmission through the $S$-$R$ link, the successful transmission through the $R$-$D$ link, and the failed transmission, respectively. Moreover, we use $\bullet$, {\scriptsize{$\triangle$}}, and $*$ to denote the events that none of the links are active, the $S$-$R$ link and the $S$-$D$ link are active, and the $R$-$D$ link is active, respectively.}
	\label{fig:aoirp}
	\vspace{-0.4cm}
\end{figure}

Similar to the process of obtaining \eqref{aoisp2}, we can express the average AoI of the RP-AoR protocol as 
\begin{equation}\label{aoirp1}
\bar{\Delta}_{RP}=\frac{\mathbb{E}[Y_{k-1}Z_{k}]}{\mathbb{E}[Z_{k}]}+\frac{\mathbb{E}[Z_{k}^2]}{2\mathbb{E}[Z_{k}]}-\frac{1}{2}.
\end{equation}
Note that in the RP-AoR protocol, if there is a status update waiting for transmission (i.e., the system is not empty) at $t_{k-1}'$, the $(k-1)$th status update received by $D$ can only be transmitted through the $S$-$R$-$D$ link.
However, if the system is empty at $t_{k-1}'$, the $(k-1)$th status update received by $D$ can be transmitted through either the $S$-$D$ link or the $S$-$R$-$D$ link. Therefore, for the two system states at $t_{k-1}'$, the distributions of the corresponding service time $S_{k-1}$ are different. In terms of the interdeparture time $Z_k$, it is affected by whether the $(k-1)$th status update received by $D$ leaves behind an empty system. As shown in Fig. 2, the system is not empty at $t_2'$, but empty at $t_4'$. During the interdeparture time $Z_3$, the system directly starts the service with the update left in the system at $t_2'$, while the system needs to wait for a status update generation to start the transmission within $Z_5$. This indicates that the system state at $t_{k-1}'$ determines whether the interdeparture time $Z_k$ includes the system waiting time $W_k$, and thus affects the distribution of $Z_k$. Based on the above analysis, it is apparent that the service time $S_{k-1}$ and the interdeparture time $Z_k$ are not independent in the RP-AoR protocol. As $Y_{k-1}=H_{k-1}+S_{k-1}$, we can find that $Y_{k-1}$ and $Z_k$ are not independent as well. To obtain the average AoI of the RP-AoR protocol, we derive the terms $\mathbb{E}[Y_{k-1}Z_{k}]$, $\mathbb{E}[Z_k]$ and $\mathbb{E}[Z_k^2]$ in the following. 

%
\subsubsection{Stationary Distribution of System State}
Prior to deriving the terms in the average AoI expression, we first need to find the stationary distribution of the system state.  We define $v(t) \triangleq \big(s(t), r(t)\big)$ as the system state of the considered cooperative system at the end of time slot $t$, where $s(t)$ indicates whether the local AoI at $S$ is lower than that at both $R$ and $D$ (i.e., whether $S$ has the freshest status update in the system to broadcast), and $r(t)$ indicates whether the local AoI at $R$ is lower than that at $D$ (i.e., whether $R$ has a fresher status update than that at $D$ to forward). Specifically, $s(t)=0$ and $s(t)=1$ indicate that at the end of time slot $t$, the status update at $S$ is not the freshest in the system and $S$ has the freshest status update in the system, respectively. Moreover, $r(t)=0$ and $r(t)=1$ indicate that at the end time slot $t$, $R$ has no fresher status updates than that at $D$ to forward, and $R$ has a fresher status update than that at $D$ to forward, respectively. Fig. \ref{Markov} depicts the state transition diagram. Based on the state transition, we can find that $\{v(t), t=1,2,...\}$ is an irreducible, positive-recurrent embedded Markov chain.


\begin{figure}
	\centering
	\includegraphics[width=0.9\linewidth]{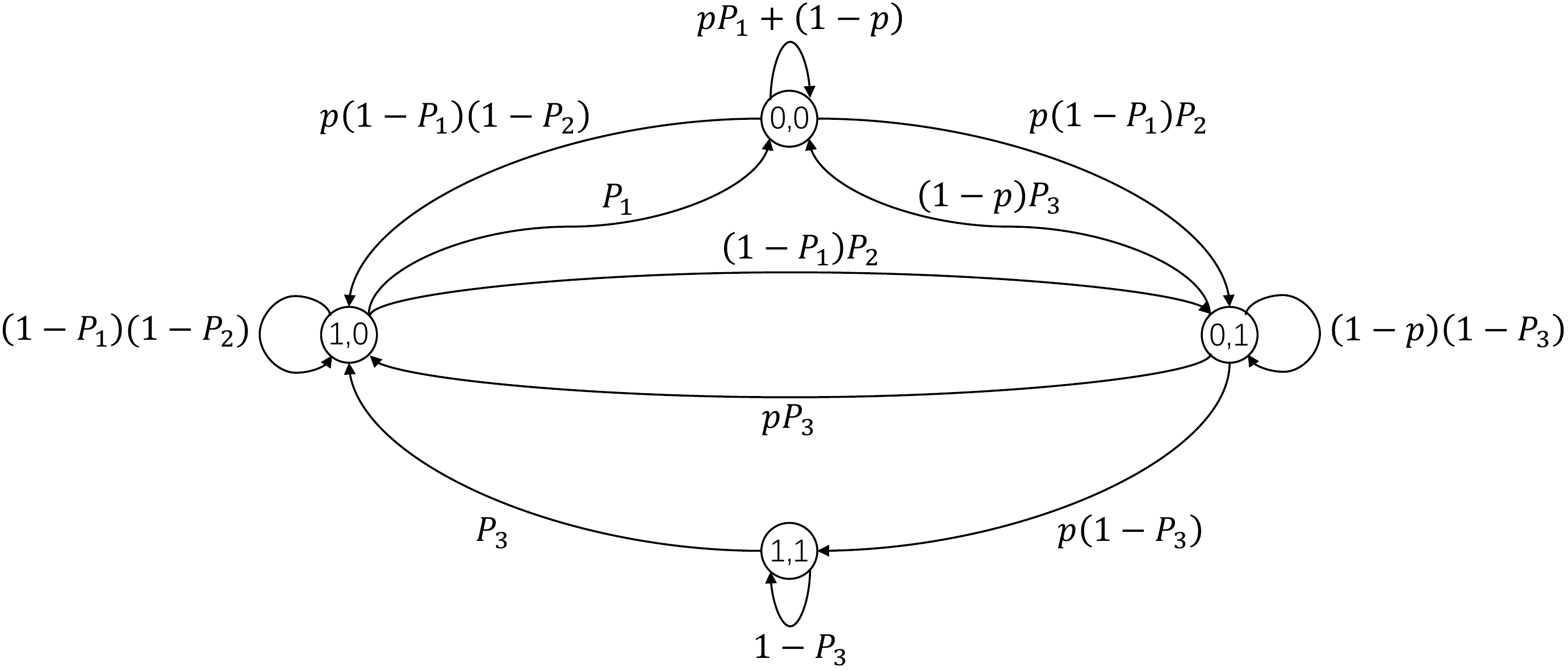}
	\caption{Markov chain for the system state $v(t)$ of the considered cooperative IoT system.}
	\label{Markov}
	\vspace{-0.4cm}
\end{figure}

%
%
%

With the transition probabilities in Fig. \ref{Markov}, we derive the stationary distribution of the Markov chain for the RP-AoR protocol, given by
\begin{subequations}
		\begin{equation}
		\pi_{0,0}=\frac{P_3\cdot\left[pP_1+(1-p)P_3\right]}{P_3(1-\alpha)(1+\beta)+p^2P_2(1-P_1)},
		\end{equation}
		\begin{equation}
		\pi_{0,1}=\frac{pP_2P_3(1-P_1)}{P_3(1-\alpha)(1+\beta)+p^2P_2(1-P_1)},
		\end{equation}
		\begin{equation}
		\pi_{1,0}=\frac{p^2P_3(1-P_1)+pP_3^2\beta}{P_3(1-\alpha)(1+\beta)+p^2P_2(1-P_1)},
		\end{equation}
		\begin{equation}
		\pi_{1,1}=\frac{p^2P_2(1-P_1)(1-P_3)}{P_3(1-\alpha)(1+\beta)+p^2P_2(1-P_1)},
		\end{equation}
\end{subequations}
where $\alpha$ and $\beta$ are defined in Section \ref{3B}.

\subsubsection{First Moment of the Interdeparture Time $\mathbb{E}[Z_k]$}
To evaluate $\mathbb{E}[Z_{k}]$, we condition on whether the $(k-1)$th status update received by $D$ leaves behind an empty system. We denote this event as $\Upsilon_{k-1}$ and its complement as  $\bar{\Upsilon}_{k-1}$. Based on the Bayes formula, the probability of $\Upsilon_{k-1}$ can be expressed as 
\begin{equation}\label{Pempt}
\mathrm{Pr}(\Upsilon_{k-1})=\frac{\mathrm{Pr}(A,B)}{\mathrm{Pr}(A)},
\end{equation}
where event $A$ denotes that $D$ receives a status update in time slot $t$, and event $B$ denotes that the reception at $D$ leaves behind an empty system.

In terms of the denominator, event $A$ occurs in four cases, which have different system states $v'$ at the end of the previous time slot (i.e., time slot $t-1$) and different corresponding conditions during the current time slot (i.e., time slot $t$): $(a)$ When $v'=(0,0)$, a new status update is generated at $S$ and successfully received by $D$ through the $S$-$D$ link during the current time slot; $(b)$ When $v'=(0,1)$, the update stored at $R$ is successfully forwarded to $D$ during the current time slot; $(c)$ When $v'=(1,0)$, a status update at $S$ is transmitted to $D$ through the $S$-$D$ link during the current time slot; $(d)$ When $v'=(1,1)$, the update stored at $R$ is successfully forwarded to $D$ during the current time slot.  In terms of the numerator, when event $B$ occurs together with event $A$, some of the above cases no longer meet the requirements. Specifically, if the successful reception at $D$ is through the $R$-$D$ link, the case where $S$ has the freshest status update in the system at the time of the reception are no longer applicable. Therefore, it is obvious that case $(d)$ will not appear in the calculation of the numerator. In addition, in case $(b)$, there should be no new status updates generated at $S$ during the forward by $R$.
Based on the above analysis, we can rewrite \eqref{Pempt} as 
\begin{equation}\label{P_empty}
\begin{aligned}
\mathrm{Pr}(\Upsilon_{k-1})&=\frac{\pi_{0,0}\cdot pP_1+\pi_{0,1}\cdot (1-p)P_3+\pi_{1,0}\cdot P_1}{\pi_{0,0}\cdot pP_1+\pi_{0,1}\cdot P_3+\pi_{1,0}\cdot P_1+\pi_{1,1}\cdot P_3}\\
&=\frac{pP_1+P_3(1-p-\beta)}{(1-\alpha)\cdot[1-(1-P1)(1-P2)]},
\end{aligned}
\end{equation}
and $\mathrm{Pr}(\bar{\Upsilon}_{k-1})=1-\mathrm{Pr}(\Upsilon_{k-1})$.


\begin{figure*}
	\begin{equation}\label{T}
	\begin{aligned}
	\mathbb{E}[T_k]&=\sum_{l=1}^{\infty}(1-p)^{l-1}(1-P_1)^{l-1}(1-P_2)^{l-1}P_1\cdot l\\
	&\quad + \sum_{n=1}^{\infty}\sum_{m=1}^{\infty}(1-p)^{m-1}(1-P1)^m(1-P_2)^{m-1}P_2(1-P_3)^{n-1}P_3\cdot (m+n)\\
	&\quad +\sum_{l=1}^{\infty}(1-P_1)^l(1-P_2)^l(1-p)^{l-1}p\cdot \big(l+\mathbb{E}\big[T_k'\big]\big).
	\end{aligned}
	\end{equation}
	\vspace{-0.3cm}	
	\noindent\rule[0\baselineskip]{\textwidth}{0.4pt}
\end{figure*}

Under the condition of $\Upsilon_{k-1}$, the system is empty at $t_{k-1}'$. Therefore, the corresponding interdeparture time $Z_k$ is the sum of the system waiting time for the generation of the first status update after $t_{k-1}'$ and the system transmission time from the generation to $t_k'$. We thus have $\mathbb{E}[Z_k|\Upsilon_{k-1}]=\mathbb{E}[W_k|\Upsilon_{k-1}]+\mathbb{E}[T_k|\Upsilon_{k-1}]$. Similarly, the first moment of the interdeparture time conditioned on $\bar{\Upsilon}_{k-1}$ can be expressed as $\mathbb{E}\big[Z_k|\bar{\Upsilon}_{k-1}\big]=\mathbb{E}\big[T_k|\bar{\Upsilon}_{k-1}\big]$. In terms of $\mathbb{E}[Z_k|\Upsilon_{k-1}]$, the waiting time $W_k$ only depends on the generation probability $p$ since the $(k-1)$th status update received by $D$ leaves behind an empty system. As the status updates are generated according to a Bernoulli process, $W_{k}$ conditioned on $\Upsilon_{k-1}$ follows a geometric distribution with parameter $p$, and its first moment can be readily given by $\mathbb{E}[W_k|\Upsilon_{k-1}]=(1-p)/p$. We then move on to the calculation of the terms $\mathbb{E}[T_k|\Upsilon_{k-1}]$ and $\mathbb{E}\big[T_k|\bar{\Upsilon}_{k-1}\big]$.


Under both conditions of $\Upsilon_{k-1}$ and $\bar{\Upsilon}_{k-1}$, $T_k$ behaves as the following three possible cases: $(a)$ The update is successfully received by $D$ through the $S$-$D$ link without being preempted; $(b)$ The update is successfully received by $D$ through the $S$-$R$-$D$ link without being preempted; $(c)$ The update is preempted by a new status update before being successfully received by either $R$ or $D$, and the new update may be preempted by multiple new updates. 
Therefore, the first moment of $T_{k}$ is the same under both conditions and we have $\mathbb{E}[T_k|\Upsilon_{k-1}]=\mathbb{E}\big[T_k|\bar{\Upsilon}_{k-1}\big]=\mathbb{E}[T_k]$. Based on the above possible cases, $\mathbb{E}[T_k]$ can be expressed by \eqref{T} at the top of the next page.

After some manipulations by applying the recursive method \cite{b8} and the finite sum equations given in \cite[Eqs. (0.112) and (0.113)]{b7}, we have
\begin{equation}\label{T_simplify}
\mathbb{E}[T_k]=\frac{P_3+P_2(1-P_1)}{P_3[1-(1-P_1)(1-P_2)]}.
\end{equation}
By combing $\mathbb{E}[W_k|\Upsilon_{k-1}]=(1-p)/p$ and \eqref{T_simplify}, we attain closed-form expressions for the first moment of $Z_k$, given by
\begin{subequations}\label{Z}
		\begin{equation}
\mathbb{E}\big[Z_{k}|
\bar{\Upsilon}_{k-1}\big]=\frac{P_2(1-P_1)+P_3}{P_3[1-(1-P_1)(1-P_2)]}, 
\end{equation}
and
\begin{equation}
\mathbb{E}[Z_{k}|\Upsilon_{k-1}]=\frac{1-p}{p}+\mathbb{E}\big[Z_{k}|
\bar{\Upsilon}_{k-1}\big].
\end{equation}
\end{subequations}

\subsubsection{Second Moment of the Interdeparture Time $\mathbb{E}\big[Z_{k}^{2}\big]$}
Recall that under the conditions of $\Upsilon_{k-1}$ and $\bar{\Upsilon}_{k-1}$, we have $Z_k=W_k+T_k$ and $Z_k=T_k$, respectively. Therefore, the corresponding second moment of $Z_k$ can be expressed as
\begin{subequations}\label{Z^2}
	\begin{equation}
	\resizebox{.98\hsize}{!}{$\begin{aligned}
	\mathbb{E}\big[Z_k^2|\Upsilon_{k-1}\big]&\!=\!\mathbb{E}\big[(W_k\!+\!T_k)^2|\Upsilon_{k-1}\big]\\
	&\!=\!\mathbb{E}\big[W_k^2|\Upsilon_{k-1}\big]\!+\!2\mathbb{E}[W_kT_k|\Upsilon_{k-1}]\!+\!\mathbb{E}\big[T_k^2|\Upsilon_{k-1}\big].
	\end{aligned}$}
	\end{equation}
	and
	\begin{equation}
	\mathbb{E}\big[Z_k^2|\bar{\Upsilon}_{k-1}\big]=\mathbb{E}\big[T_k^2|\bar{\Upsilon}_{k-1}\big].
	\end{equation}
\end{subequations}
It is readily to find that $W_k$ and $T_k$ are conditionally independent under $\Upsilon_{k-1}$. Thus, we have $\mathbb{E}[W_kT_k|\Upsilon_{k-1}]=\mathbb{E}[W_k|\Upsilon_{k-1}]\cdot \mathbb{E}[T_k|\Upsilon_{k-1}]$. As $W_k$ follows a geometric distribution with parameter $p$, we have $\mathbb{E}\big[W_k^2|\Upsilon_{k-1}\big]\!=\!(p^2-3p+2)/p^2$. In order to evaluate $\mathbb{E}\big[Z_k^2\big]$, the only remaining task is to calculate $\mathbb{E}\big[T_k^2|\Upsilon_{k-1}\big]$ and $\mathbb{E}\big[T_k^2|\bar{\Upsilon}_{k-1}\big]$.

\begin{figure*}
	\begin{equation}\label{T^2}
	\begin{aligned}
	\mathbb{E}\big[T_k^2\big]&=\sum_{l=1}^{\infty}(1-p)^{l-1}(1-P_1)^{l-1}(1-P_2)^{l-1}P_1\cdot l^2\\
	&\quad + \sum_{n=1}^{\infty}\sum_{m=1}^{\infty}(1-p)^{m-1}(1-P1)^m(1-P_2)^{m-1}P_2(1-P_3)^{n-1}P_3\cdot (m^2+2mn+n^2)\\
	&\quad +\sum_{l=1}^{\infty}(1-P_1)^l(1-P_2)^l(1-p)^{l-1}p\cdot \left(l^2+2l\cdot \mathbb{E}[T_k]+\mathbb{E}\big[(T_k')^2\big]\right).
	\end{aligned}
	\end{equation}	
	\vspace{-0.4cm}
	\noindent\rule[0\baselineskip]{\textwidth}{0.4pt}
\end{figure*}

Similar to \eqref{T}, $\mathbb{E}\big[T_k^2|
\Upsilon_{k-1}\big]$ and $\mathbb{E}\big[T_k^2|\bar{\Upsilon}_{k-1}\big]$ have the same value, which can be expressed by \eqref{T^2} at the top of the next page. Similar to the process of obtaining \eqref{T_simplify} from \eqref{T}, we can rewrite $\mathbb{E}\big[T_k^2\big]$ as
\begin{equation}\label{T^2_simplify}
\resizebox{.98\hsize}{!}{$\begin{aligned}
	\mathbb{E}\big[T_k^2\big]\!=\!\ &\frac{1}{P_3^2\!\cdot\![1\!-\!(1\!-\!P_1)(1\!-\!P_2)]^2}\\
	&\times\!\Big\{\!P_2^2\!\cdot\!(1\!-\!P_1)^2\!\cdot\!(2\!-\!P_3)\!+\!P_3^2\!\cdot\![1\!+\!(1\!-\!P_1)(1\!-\!P_2)]\\
	&\quad \ +\!P_2(2\!-\!P_1)\!\cdot\![1\!-\!(1\!-\!P_1)(1\!-\!P_3)]\!-\!P_1^2P_2\Big\}.
	\end{aligned}$}
\end{equation}
Based on the above analysis, the second moment of $Z_k$ can be expressed as
\begin{subequations}\label{Z^2_final}
		\begin{equation}
		\resizebox{.98\hsize}{!}{$\begin{aligned}
			\mathbb{E}\big[Z_{k}^2|\bar{\Upsilon}_{k-1}\big]\!=\!\ &\frac{1}{P_3^2\!\cdot\![1\!-\!(1\!-\!P_1)(1\!-\!P_2)]^2}\\
			&\!\times\!\Big\{\!P_2^2\!\cdot\!(1\!-\!P_1)^2\!\cdot\!(2\!-\!P_3)\!+\!P_3^2\!\cdot\![1\!+\!(1\!-\!P_1)(1\!-\!P_2)]\\
			&\quad \ +\!P_2(2\!-\!P_1)\!\cdot\![1\!-\!(1\!-\!P_1)(1\!-\!P_3)]\!-\!P_1^2P_2\Big\},
			\end{aligned}$}
\end{equation}
	and
	\begin{equation}
		\begin{aligned}
			\mathbb{E}\big[Z_{k}^2|\Upsilon_{k-1}\big]&=\mathbb{E}\big[Z_{k}^2|\bar{\Upsilon}_{k-1}\big]+\frac{p^2-3p+2}{p^2}\\
			&\,\quad +\frac{(2-2p)\cdot[P_2(1-P_1)+P_3]}{p P_3\cdot[1-(1-P_1)(1-P_2)]}.
		\end{aligned}
	\end{equation}
\end{subequations}

\subsubsection{$\mathbb{E}[Y_{k-1}Z_k]$}
We finally evaluate the term $\mathbb{E}[Y_{k-1}Z_k]$. The total time that the $(k-1)$th status update received by $D$ spends in the system is the sum of its waiting time before being served and its service time. Therefore, we have  $Y_{k-1}=H_{k-1}+S_{k-1}$. Under the condition of $\Upsilon_{k-1}$, the first moment of $Y_{k-1}$ can be expressed by $\mathbb{E}[Y_{k-1}|\Upsilon_{k-1}]=\mathbb{E}[H_{k-1}|\Upsilon_{k-1}]+\mathbb{E}[S_{k-1}|\Upsilon_{k-1}]$. To further compute $\mathbb{E}[Y_{k-1}Z_k|\Upsilon_{k-1}]$, we have the following lemma.
\begin{lemma}\label{lemma2}
	In the RP-AoR protocol, if the $(k-1)$th status update received by $D$ leaves behind an empty system, the total time that the $(k-1)$th status update received by $D$ spends in the system is independent of the interdeparture time between the $(k-1)$th and the $k$th received status update at $D$ such that
	\begin{equation} \label{lemma2equation}
	\mathbb{E}[Y_{k-1}Z_k|\Upsilon_{k-1}]=\mathbb{E}[Y_{k-1}|\Upsilon_{k-1}]\cdot \mathbb{E}[Z_k|\Upsilon_{k-1}].
	\end{equation}
\end{lemma}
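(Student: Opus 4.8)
The plan is to show that, conditioned on the regeneration event $\Upsilon_{k-1}$, the sojourn time $Y_{k-1}$ and the interdeparture time $Z_k$ are independent, after which \eqref{lemma2equation} follows immediately by factoring the conditional expectation. This mirrors the strategy behind Lemma~\ref{lemma1}, but the empty-system conditioning is now essential: as discussed just before the lemma, $S_{k-1}$ and $Z_k$ (and hence $Y_{k-1}$ and $Z_k$) are \emph{not} independent in general for the RP-AoR protocol, and the event $\Upsilon_{k-1}$ is precisely what restores the factorization.

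First I would decompose both quantities along the departure epoch $t'_{k-1}$. Because $\Upsilon_{k-1}$ forces the system to be empty right after $t'_{k-1}$, the next departure can only begin after a fresh generation, so $Z_k = W_k + T_k$ with $W_k$ geometric of parameter $p$ and $T_k$ the service time of the $k$th received update; likewise $Y_{k-1} = H_{k-1}+S_{k-1}$. Substituting the first decomposition and using linearity, it suffices to factor the two terms $\mathbb{E}[Y_{k-1}W_k|\Upsilon_{k-1}]$ and $\mathbb{E}[Y_{k-1}T_k|\Upsilon_{k-1}]$, since their sum then reassembles into $\mathbb{E}[Y_{k-1}|\Upsilon_{k-1}]\cdot\mathbb{E}[Z_k|\Upsilon_{k-1}]$.

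The core step is a regeneration argument. I would observe that both $\Upsilon_{k-1}$ and $Y_{k-1}=H_{k-1}+S_{k-1}=t'_{k-1}-t_{k-1}$ are measurable with respect to the generation indicators and channel outcomes in the slots up to and including $t'_{k-1}$, whereas $W_k$ and $T_k$ are, under $\Upsilon_{k-1}$, deterministic functions of the generation indicators and channel outcomes in the slots \emph{strictly after} $t'_{k-1}$ only. The reason is precisely that an empty system carries no residual update at $t'_{k-1}$, so the evolution restarts from scratch and never references any pre-$t'_{k-1}$ state. Since the Bernoulli arrivals and the block-fading channels are independent across slots, the post-$t'_{k-1}$ slots are independent of the pre-$t'_{k-1}$ slots; as $\Upsilon_{k-1}$ belongs to the pre-$t'_{k-1}$ history, conditioning on it preserves this independence. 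Hence $Y_{k-1}$ is conditionally independent of both $W_k$ and $T_k$ given $\Upsilon_{k-1}$, which yields the two factorizations and completes the proof.

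The main obstacle I anticipate is making the regeneration claim airtight rather than merely intuitive: I must argue carefully that under $\Upsilon_{k-1}$ the quantities $W_k$ and $T_k$ genuinely depend on no past state, and contrast this with the complementary event $\bar{\Upsilon}_{k-1}$, where $Z_k=T_k$ starts from a residual update already shaped by the history of the $(k-1)$th update so that the factorization would fail. Emphasizing that the emptiness of the system is exactly what severs this dependence is what turns the heuristic ``different updates are independent'' into a rigorous statement.
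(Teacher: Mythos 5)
Your proof is correct, and it reaches the factorization by a route that differs in a useful way from the paper's. The paper keeps the decomposition $Y_{k-1}=H_{k-1}+S_{k-1}$ and argues component-wise: $Z_k$ is independent of $H_{k-1}$ because $H_{k-1}$ is determined by whether the system was empty at the $(k-2)$th departure epoch $t'_{k-2}$ rather than by the state at $t'_{k-1}$, and $Z_k$ is conditionally independent of $S_{k-1}$ given $\Upsilon_{k-1}$ by the same reasoning as in Lemma \ref{lemma1}. You instead decompose $Z_k=W_k+T_k$ under $\Upsilon_{k-1}$, leave $Y_{k-1}$ intact, and make a single regeneration argument: $Y_{k-1}$ and the event $\Upsilon_{k-1}$ are measurable with respect to the slot randomness up to the stopping time $t'_{k-1}$, while under $\Upsilon_{k-1}$ the pair $(W_k,T_k)$ is a function of the i.i.d.\ slot randomness strictly after $t'_{k-1}$, since the empty state carries no residual update from which the future evolution could inherit dependence; conditioning on an event in the pre-$t'_{k-1}$ history preserves the independence of past and future. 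This is tighter than the paper's treatment of $H_{k-1}$, whose justification (that $H_{k-1}$ and $Z_k$ are ``conditioned on'' different departure epochs) is informal and does not by itself imply independence; your measurability argument covers $H_{k-1}$ and $S_{k-1}$ in one stroke and makes explicit that emptiness at $t'_{k-1}$ is exactly what severs the dependence, which is also why the factorization fails under $\bar{\Upsilon}_{k-1}$. What the paper's version buys is brevity and reuse of Lemma \ref{lemma1}; what yours buys is rigor, and in fact a stronger conclusion (conditional independence of $Y_{k-1}$ from the whole pair $(W_k,T_k)$, hence from $Z_k$ directly) than the statement requires.
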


\begin{proof}
	Since we have $Y_{k-1}=H_{k-1}+S_{k-1}$, Lemma \ref{lemma2} holds if the interdeparture time $Z_k$ is independent of both $H_{k-1}$ and $S_{k-1}$ under the condition of $\Upsilon_{k-1}$. In the RP-AoR protocol, $H_{k-1}$ is conditioned on whether the system is empty when the generation of the $(k-1)$th status update received by $D$ occurs, but as we mentioned, $Z_k$ is conditioned on whether the $(k-1)$th update received by $D$ leaves behind an empty system. This indicates that $Z_k$ is independent of $H_{k-1}$. Similar to the proof of Lemma \ref{lemma1}, we can readily prove that $Z_k$ is conditionally independent of $S_{k-1}$ under $\Upsilon_{k-1}$. This completes the proof.
\end{proof}
\noindent Similarly, we can have $\mathbb{E}\big[Y_{k-1}Z_k|\bar{\Upsilon}_{k-1}\big]=\mathbb{E}\big[Y_{k-1}|\bar{\Upsilon}_{k-1}\big]\cdot \mathbb{E}\big[Z_k|\bar{\Upsilon}_{k-1}\big]$, where $\mathbb{E}\big[Y_{k-1}|\bar{\Upsilon}_{k-1}\big]=\mathbb{E}\big[H_{k-1}|\bar{\Upsilon}_{k-1}\big]+\mathbb{E}\big[S_{k-1}|\bar{\Upsilon}_{k-1}\big]$ and $\mathbb{E}\big[Z_k|\bar{\Upsilon}_{k-1}\big]=\mathbb{E}\big[T_k|\bar{\Upsilon}_{k-1}\big]$. The proof is similar to the above and is thus omitted for brevity. Since $\mathbb{E}[Z_k|\Upsilon_{k-1}]$ and $\mathbb{E}\big[Z_k|\bar{\Upsilon}_{k-1}\big]$ is derived in \eqref{Z}, the remaining tasks of calculating $\mathbb{E}[Y_{k-1}Z_{k}]$ are to evaluate $\mathbb{E}[Y_{k-1}|\Upsilon_{k-1}]$ and $\mathbb{E}\big[Y_{k-1}|\bar{\Upsilon}_{k-1}\big]$.

Recall that we have $Y_{k-1}=H_{k-1}+S_{k-1}$. To compute $\mathbb{E}[Y_{k-1}]$, we first consider the packet waiting time before being served, i.e., $H_{k-1}$. As we mentioned, in the RP-AoR protocol, $H_{k-1}$ is conditioned on whether the system is empty when the generation of the $(k-1)$th status update received by $D$ occurs.
In other words, it is conditioned on the system state when the $(k-2)$th update received by $D$ leaves the system, rather than the system state when the $(k-1)$th update received by $D$ leaves the system, which indicates that $\mathbb{E}[H_{k-1}|\Upsilon_{k-1}]=\mathbb{E}\big[H_{k-1}|\bar{\Upsilon}_{k-1}\big]=\mathbb{E}[H_{k-1}]$. Obviously, if the $(k-2)$th status update received by $D$ leaves the system empty, we have $H_{k-1}=0$. Otherwise, $H_{k-1}$ is determined by whether the update left in the system at $t_{k-2}'$ is successfully received by $D$. We denote by $\Theta$ the event that the left update is preempted during the transmission (i.e., not successfully received by $D$) and by $\bar{\Theta}$ its complement. The probability that the left update is preempted before the $j$th transmission starts can be expressed as
\vspace{-0.01cm}
\begin{equation}
P_{preempt}=(1-p)^{j-1}(1-P_1)^{j-1}(1-P_2)^{j-1}p.
\vspace{-0.01cm}
\end{equation}
Therefore, the probability of event $\Theta$ are given by
\vspace{-0.01cm}
\begin{equation}
\mathrm{Pr}(\Theta)=\sum_{j=1}^{\infty}P_{preempt}=\frac{p}{1-\beta},
\vspace{-0.01cm}
\end{equation}
and $\mathrm{Pr}(\bar{\Theta})=1-\mathrm{Pr}(\Theta)$. Under the condition of $\Theta$, a fresher status update preempts the update left in the system and starts to be served without waiting. Thus, we can have $H_{k-1}=0$ conditioned on $\Theta$, and the only remaining task to evaluate $\mathbb{E}[H_{k-1}]$ is to calculate $\mathbb{E}\big[H_{k-1}|\bar{\Theta}\big]$. If an update is left in the system at $t_{k-2}'$, it can only be generated at $S$ after the $(k-2)$th status update received by $D$ has arrived at $R$. This means that $\mathbb{E}\big[H_{k-1}|\bar{\Theta}\big]$ is the residual transmission time of the $(k-2)$th update received by $D$ on the $R$-$D$ link. After the update left in the system at $t_{k-2}'$ is generated at $S$, the probability that the transmission on the $R$-$D$ link performs $n$ times can be expressed as
\vspace{-0.02cm}
\begin{equation}
P_{n}=(1-p)^{n-1}(1-P_3)^{n-1}P_3.
\vspace{-0.01cm}
\end{equation}
The corresponding first moment of the transmission time, i.e., $\mathbb{E}\big[H_{k-1}|\bar{\Theta}\big]$, can be calculated by
\vspace{-0.01cm}
\begin{equation}
\mathbb{E}\big[H_{k-1}|\bar{\Theta}\big]=\frac{\sum_{n=1}^{\infty}P_n\cdot n}{\sum_{n=1}^{\infty}P_n}=\frac{1}{1-\alpha}.
\end{equation}
By conditioning on whether the $(k-2)$th status update received by $D$ leaves the system empty or not, we now evaluate the first moment of $H_{k-1}$ for the RP-AoR protocol, given by
\begin{equation}\label{H}
	\begin{aligned}
		\mathbb{E}[H_{k-1}]\!=&\Big\{\mathbb{E}\big[H_{k-1}|\bar{\Theta}\big]\cdot \mathrm{Pr}\big(\bar{\Theta}\big)+0 \times \mathrm{Pr}(\Theta)\Big\}\times \mathrm{Pr}\big(\bar{\Upsilon}_{k-1}\big)\\
		&\ +0 \times \mathrm{Pr}(\Upsilon_{k-1})\\
		&\!\!\!\!\!\!=\frac{p P_2(1-p)(1-P_1)}{(1-\alpha)^2(1-\beta)}.
	\end{aligned}
\end{equation}
%
%
%
%
After the calculation of $\mathbb{E}[H_{k-1}]$, we then move on to the other component of $Y_{k-1}$, i.e., $S_{k-1}$.


Under the condition of $\Upsilon_{k-1}$, the $(k-1)$th status update received by $D$ leaves behind an empty system. Therefore, the first moment of the service time $\mathbb{E}[S_{k-1}|\Upsilon_{k-1}]$ is composed of the average service time for the same two types of status updates as in the SP-AoR protocol. Based on \eqref{PL}$-$\eqref{Ssp2}, the first moment of the service time conditioned on $\Upsilon_{k-1}$ are given by
\begin{equation}\label{S1}
\mathbb{E}[S_{k-1}|\Upsilon_{k-1}]=\frac{1}{1-\beta}+\frac{1}{1-\alpha}\cdot\frac{\gamma}{P_{1}(1-\alpha)+\gamma}.
\end{equation}
Under the condition of $\bar{\Upsilon}_{k-1}$, the system is not empty when the $(k-1)$th status update received by $D$ leaves the system. This means that the status update must be transmitted through the $S$-$R$-$D$ link, and there must be fresher status updates generated at $S$ after it arrives at $R$. The first moment of the service time conditioned on $\bar{\Upsilon}_{k-1}$ can thus be expressed as
\begin{equation}\label{RS1}
	\resizebox{.98\hsize}{!}{$\begin{aligned}
\mathbb{E}\big[S_{k-1}|
\bar{\Upsilon}_{k-1}\big]\!=\!\frac{\sum_{n=1}^{\infty}\!\sum_{m=1}^{\infty}\!P\!\left(S\!\rightarrow\! R\!\rightarrow\! D|\bar{\Upsilon}_{k-1}\right)\!\cdot\! (m\!+\!n)}{\sum_{n=1}^{\infty}\!\sum_{m=1}^{\infty}\!P\!\left(S\!\rightarrow\! R\!\rightarrow\! D|\bar{\Upsilon}_{k-1}\right)},
\end{aligned}$}
\end{equation}
where $m$ and $n$ are defined in Section \ref{3B}. $P\left(S\rightarrow R\rightarrow D|\bar{\Upsilon}_{k-1}\right)$ is the probability that the status update is received by $D$ after transmitting $(m+n)$ times conditioned on $\bar{\Upsilon}_{k-1}$, which can be expressed as
\begin{equation}
\begin{aligned}
P\!\left(S\!\rightarrow \!R\!\rightarrow\! D|\bar{\Upsilon}_{k-1}\right)\!=\!\ &(1\!-\!p)^{m-1}(1\!-\!P_1)^m(1\!-\!P_2)^{m-1}P_2\\
&\!\times(1\!-\!P_3)^{n-1}P_3\left[1\!-\!(1\!-\!p)^n\right].
\end{aligned}
\end{equation}
By applying \cite[Eqs. (0.112) and (0.113)]{b7}, we can simplify \eqref{RS1} as
\begin{equation}\label{S2}
\mathbb{E}\big[S_{k-1}|\bar{\Upsilon}_{k-1}\big]=\frac{2}{P_3}+\frac{1}{1-\beta}-\frac{P_3^2(1-p)+p}{P_3(1-\alpha)}.
\end{equation}
Based on \eqref{H}, \eqref{S1}, and \eqref{S2}, we attain closed-form expressions for the first moment of $Y_{k-1}$, given by
\begin{subequations} \label{Y}
			\begin{equation}
				\begin{aligned}
					\mathbb{E}\big[Y_{k-1}|\bar{\Upsilon}_{k-1}\big]=\ &\frac{pP_2(1-p)(1-P_1)}{(1-\alpha)^2(1-\beta)}+\frac{1}{1-\beta}\\
					&\!+\frac{2}{P_3}-\frac{P_3^2(1-p)+p}{P_3(1-\alpha)},
				\end{aligned}
	\end{equation}
	and
			\begin{equation}
			\begin{aligned}
			\mathbb{E}[Y_{k-1}|\Upsilon_{k-1}]=\ &\frac{pP_2(1-p)(1-P_1)}{(1-\alpha)^2(1-\beta)}+\frac{1}{1-\beta}\\
			&+\frac{\gamma}{P_1(1-\alpha)^2+\gamma(1-\alpha)}.
			\end{aligned}
	\end{equation}
\end{subequations}

\subsubsection{Average AoI}

Based on the above analysis, the exact closed-form expression of the average AoI for the RP-AoR protocol can be presented in the following theorem.

\begin{theorem}\label{Theorem 3}
	The average AoI of the RP-AoR protocol is given by
\begin{equation}
	\resizebox{.98\hsize}{!}{$\begin{aligned}
			\bar{\Delta}_{RP}&\!=\!\frac{1}{\mathbb{E}[Z_k|\Upsilon_{k-1}]\!\cdot\! \mathrm{Pr}(\Upsilon_{k-1})\!+\!\mathbb{E}\big[Z_k|\bar{\Upsilon}_{k-1}\big]\!\cdot\! [1\!-\!\mathrm{Pr}(\Upsilon_{k-1})]}\\
			&\quad\!\times\Big\{\mathbb{E}[Y_{k-1}|\Upsilon_{k-1}]\cdot\mathbb{E}[Z_k|\Upsilon_{k-1}]\cdot \mathrm{Pr}(\Upsilon_{k-1})\\
			&\quad\quad \quad \!\!\! +\mathbb{E}\big[Y_{k-1}|\bar{\Upsilon}_{k-1}\big]\cdot\mathbb{E}\big[Z_k|\bar{\Upsilon}_{k-1}\big]\cdot \big[1\!-\!\mathrm{Pr}(\Upsilon_{k-1})\big]\\\
			&\quad \quad \quad \!\!\! +\frac{1}{2}\cdot \mathbb{E}\big[Z_k^2|\Upsilon_{k-1}\big]\cdot\mathrm{Pr}(\Upsilon_{k-1})\\
			&\quad \quad \quad \!\!\!+\frac{1}{2}\cdot \mathbb{E}\big[Z_k^2|\bar{\Upsilon}_{k-1}\big]\cdot\big[1-\mathrm{Pr}(\Upsilon_{k-1})\big]\Big\}-\frac{1}{2},
	\end{aligned}$}
\end{equation}
		\noindent where $\mathrm{Pr}(\Upsilon_{k-1})$ is given in \eqref{P_empty}, $\mathbb{E}[Z_k|\Upsilon_{k-1}]$ and $\mathbb{E}\big[Z_k|\bar{\Upsilon}_{k-1}\big]$ are given in \eqref{Z}, $\mathbb{E}\big[Z_k^2|\Upsilon_{k-1}\big]$ and $\mathbb{E}\big[Z_k^2|\bar{\Upsilon}_{k-1}\big]$ are given in \eqref{Z^2_final}, and $\mathbb{E}[Y_{k-1}|\Upsilon_{k-1}]$ and $\mathbb{E}\big[Y_{k-1}|\bar{\Upsilon}_{k-1}\big]$ are given in \eqref{Y}.

\end{theorem}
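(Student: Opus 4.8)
The plan is to start from the renewal–reward identity \eqref{aoirp1}, which has already been obtained by the same argument that yields \eqref{aoisp2}, and to evaluate its three constituent moments $\mathbb{E}[Y_{k-1}Z_k]$, $\mathbb{E}[Z_k]$, and $\mathbb{E}[Z_k^2]$ by conditioning on the event $\Upsilon_{k-1}$ that the $(k-1)$th received update leaves behind an empty system. Conditioning is unavoidable here, unlike in the SP-AoR case, because $Y_{k-1}$ and $Z_k$ are genuinely dependent: the system state at $t_{k-1}'$ simultaneously fixes the link class through which the $(k-1)$th update was delivered (hence its sojourn time) and whether the next interdeparture interval $Z_k$ contains a fresh waiting period $W_k$. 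Thus the naive factorization $\mathbb{E}[Y_{k-1}Z_k]=\mathbb{E}[Y_{k-1}]\,\mathbb{E}[Z_k]$ fails, and the role of the partition $\{\Upsilon_{k-1},\bar{\Upsilon}_{k-1}\}$ is precisely to restore conditional independence.

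Concretely, I would first apply the law of total expectation to write each moment as a $\mathrm{Pr}(\Upsilon_{k-1})$-weighted combination of its values under $\Upsilon_{k-1}$ and $\bar{\Upsilon}_{k-1}$; for instance $\mathbb{E}[Z_k]=\mathbb{E}[Z_k|\Upsilon_{k-1}]\mathrm{Pr}(\Upsilon_{k-1})+\mathbb{E}[Z_k|\bar{\Upsilon}_{k-1}][1-\mathrm{Pr}(\Upsilon_{k-1})]$, and identically for $\mathbb{E}[Z_k^2]$ and $\mathbb{E}[Y_{k-1}Z_k]$. For the cross term I would then invoke Lemma \ref{lemma2} together with its $\bar{\Upsilon}_{k-1}$ counterpart to replace each conditional expectation of a product by the product of conditional expectations, namely $\mathbb{E}[Y_{k-1}Z_k|\Upsilon_{k-1}]=\mathbb{E}[Y_{k-1}|\Upsilon_{k-1}]\,\mathbb{E}[Z_k|\Upsilon_{k-1}]$ and likewise under $\bar{\Upsilon}_{k-1}$. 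This is the single step that carries all the analytical weight, since it is exactly where the conditional independence established in Lemma \ref{lemma2} is consumed.

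Once the three moments are expressed purely through conditional quantities, the remainder is assembly. I would substitute $\mathrm{Pr}(\Upsilon_{k-1})$ from \eqref{P_empty}, the first-moment interdeparture expressions from \eqref{Z}, the second-moment interdeparture expressions from \eqref{Z^2_final}, and the sojourn-time expressions from \eqref{Y} into \eqref{aoirp1}. Rewriting \eqref{aoirp1} as $\bar{\Delta}_{RP}=\bigl(\mathbb{E}[Y_{k-1}Z_k]+\tfrac12\mathbb{E}[Z_k^2]\bigr)/\mathbb{E}[Z_k]-\tfrac12$ and grouping the $\Upsilon_{k-1}$- and $\bar{\Upsilon}_{k-1}$-weighted terms in numerator and denominator reproduces the displayed formula verbatim, so no further algebraic consolidation is even required.

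I expect the only genuine obstacle to be the justification of the factorization under $\bar{\Upsilon}_{k-1}$, which is asserted earlier but not separately argued. I would make it rigorous by observing that under $\bar{\Upsilon}_{k-1}$ the $(k-1)$th update necessarily traversed the $S$-$R$-$D$ link, so $Y_{k-1}$, built from $H_{k-1}$ and $S_{k-1}$, is determined solely by link outcomes preceding $t_{k-1}'$, whereas $Z_k=T_k$ is determined solely by the independent Bernoulli generations and link outcomes after $t_{k-1}'$; this is the same separation of pre- and post-departure randomness that underlies Lemma \ref{lemma2}, so the counterpart holds by an identical argument.
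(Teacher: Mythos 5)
Your proposal is correct and follows essentially the same route as the paper: starting from \eqref{aoirp1}, decomposing $\mathbb{E}[Z_k]$, $\mathbb{E}\big[Z_k^2\big]$, and $\mathbb{E}[Y_{k-1}Z_k]$ by the law of total expectation over $\{\Upsilon_{k-1},\bar{\Upsilon}_{k-1}\}$, factoring the cross terms via Lemma \ref{lemma2} and its $\bar{\Upsilon}_{k-1}$ counterpart, and then substituting \eqref{P_empty}, \eqref{Z}, \eqref{Z^2_final}, and \eqref{Y}. Your closing argument for the factorization under $\bar{\Upsilon}_{k-1}$ (separating pre-departure from post-departure randomness) is sound and merely fills in the detail the paper explicitly omits ``for brevity.''
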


\subsection{Optimization of Average AoI}
The optimal generation probability $p$ that minimizes the average AoI of the RP-AoR protocol is given by the following theorem.
\begin{theorem}\label{Theorem 4}
In the RP-AoR protocol, the optimal generation probability that minimizes the average AoI is given by 
\begin{equation}
	p_{RP}^*=1.
\end{equation}	
\end{theorem}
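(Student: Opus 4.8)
The plan is to reduce $\bar\Delta_{RP}$ to an explicit scalar function of $p$ and then show this function is minimized at the boundary $p=1$. First I would substitute the closed forms $\mathrm{Pr}(\Upsilon_{k-1})$ from \eqref{P_empty}, $\mathbb{E}[Z_k\mid\cdot]$ from \eqref{Z}, $\mathbb{E}[Z_k^2\mid\cdot]$ from \eqref{Z^2_final} and $\mathbb{E}[Y_{k-1}\mid\cdot]$ from \eqref{Y} into the expression of Theorem~\ref{Theorem 3}, so that $\bar\Delta_{RP}$ becomes a single rational expression in $p$ with $P_1,P_2,P_3$ as fixed parameters.

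The crucial simplifying step is the change of variable $x \triangleq (1-p)/p \in [0,\infty)$, for which $p=1$ corresponds to $x=0$. Under this substitution the quantities that are independent of $p$ stay fixed — in particular $\mathbb{E}[T_k]$ in \eqref{T_simplify} (call it $\tau$), $\mathbb{E}[Z_k\mid\bar\Upsilon_{k-1}]=\tau$ and $\mathbb{E}[Z_k^2\mid\bar\Upsilon_{k-1}]$ — while the waiting-induced terms linearize: the geometric waiting time gives $\mathbb{E}[W_k\mid\Upsilon_{k-1}]=x$ and $\mathbb{E}[W_k^2\mid\Upsilon_{k-1}]=x+2x^2$, so that $\mathbb{E}[Z_k\mid\Upsilon_{k-1}]=\tau+x$, $\mathbb{E}[Z_k^2\mid\Upsilon_{k-1}]=\mathbb{E}[Z_k^2\mid\bar\Upsilon_{k-1}]+x+2x^2+2\tau x$, and $\mathbb{E}[Z_k]=\tau+x\,\mathrm{Pr}(\Upsilon_{k-1})$. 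Rewriting $\alpha,\beta,\gamma$, the empty-system probability and the $\mathbb{E}[Y_{k-1}\mid\cdot]$ terms through $1-\alpha=p(1+P_3x)$ and $1-\beta=p(1+D_0x)$ with $D_0\triangleq 1-(1-P_1)(1-P_2)$ turns every factor into a ratio of polynomials in $x$, so that $\bar\Delta_{RP}=R(x)=\mathcal N(x)/\mathcal D(x)-\tfrac12$ with $\mathcal D(x)=\mathbb{E}[Z_k]>0$.

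The remaining task is to prove $R(x)\ge R(0)$ for all $x\ge0$. I would do this by examining the sign of $g(x)\triangleq \mathcal N(x)\mathcal D(0)-\mathcal N(0)\mathcal D(x)$, which is a polynomial in $x$ satisfying $g(0)=0$; since $\mathcal D(x),\mathcal D(0)>0$, the inequality $R(x)\ge R(0)$ is equivalent to $g(x)\ge0$. Because $g(0)=0$ I would factor $g(x)=x\,h(x)$ and show $h(x)\ge0$ for $x\ge0$, collecting the coefficients of $h$ as polynomials in $P_1,P_2,P_3$ and verifying they are nonnegative under the standing assumptions $0<P_1<P_2<1$ and $0<P_1<P_3<1$ (equivalently, I could show $R'(x)\ge0$, i.e.\ that $R$ is monotonically increasing). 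The value $R(0)$ follows by plugging $x=0$, where $\alpha=\beta=\gamma=0$, $\mathbb{E}[H_{k-1}]=0$, $\mathbb{E}[Z_k\mid\Upsilon_{k-1}]=\mathbb{E}[Z_k\mid\bar\Upsilon_{k-1}]=\tau$ and $\mathbb{E}[Z_k^2\mid\Upsilon_{k-1}]=\mathbb{E}[Z_k^2\mid\bar\Upsilon_{k-1}]$, so that the mixture collapses and $R(0)$ is obtained in closed form.

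The main obstacle is unquestionably this last step: establishing nonnegativity of the polynomial $h(x)$ (or of the numerator of $R'$). Its coefficients are high-degree expressions in $P_1,P_2,P_3$ with mixed signs, and the inequality genuinely relies on the ordering $P_1<P_2,\,P_1<P_3$; the art is to regroup terms so that each group is manifestly nonnegative (for instance, isolating common factors such as $P_3-P_1$ and $P_2-P_1$, as already appear in the SP-protocol analysis) rather than expanding blindly. As a consistency check on the direction of monotonicity, the intuition is that in RP-AoR the relay's forwarding is never preempted, so a larger $p$ only shortens idle waiting before a fresh packet enters service and never wastes an in-progress relay transmission; hence more frequent generation can only lower the age, pointing to $x=0$, i.e.\ $p^*_{RP}=1$.
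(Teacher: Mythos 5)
Your route is genuinely different from the paper's. The paper never touches the closed form of Theorem~\ref{Theorem 3} at all: its entire proof of Theorem~\ref{Theorem 4} is the operational argument you mention only as a closing ``consistency check'' --- with $p=1$ every $\mathbf{O}_{\mathbf{S}}$ slot carries the freshest possible update, the update waiting behind a (never-preempted) relay transmission is as fresh as possible, and $\mathbf{O}_{\mathbf{N}}$ never occurs. Your proposal instead does calculus on the closed form, and the preparatory reductions you describe are all correct: $\mathbb{E}[T_k]$ in \eqref{T_simplify} and $\mathbb{E}[T_k^2]$ in \eqref{T^2_simplify} are indeed $p$-independent; the substitution $x=(1-p)/p$ gives $\mathbb{E}[W_k\mid\Upsilon_{k-1}]=x$, $\mathbb{E}[W_k^2\mid\Upsilon_{k-1}]=x+2x^2$, $1-\alpha=p(1+P_3x)$, $1-\beta=p(1+D_0x)$; and reducing boundary optimality to $g(x)=\mathcal{N}(x)\mathcal{D}(0)-\mathcal{N}(0)\mathcal{D}(x)\ge 0$ on $[0,\infty)$ is valid since $\mathcal{D}>0$.

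The gap is that your proof stops exactly where the mathematics starts: everything you actually carry out is bookkeeping, and the theorem is equivalent to the single step you defer (``verify the coefficients of $h$ are nonnegative''), for which you give no argument. This step is not routine expansion, and your own suspicion that it \emph{genuinely requires} the orderings $P_1<P_2$, $P_1<P_3$ is correct --- which makes the deferral fatal rather than cosmetic. Indeed, evaluating the expression of Theorem~\ref{Theorem 3} at $(P_1,P_2,P_3)=(0.8,0.8,0.2)$, which violates $P_1<P_3$, gives $\bar\Delta_{RP}\approx 4.04$ at $p=0.8$ but $\bar\Delta_{RP}\approx 4.10$ at $p=1$: the minimizer is interior, so $p^*_{RP}=1$ fails outside the standing assumptions stated before Theorem~\ref{Theorem 2}. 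Consequently $h(x)\ge 0$ cannot be obtained by sign-blind regrouping; factors such as $P_3-P_1$ must enter essentially, and exhibiting that structure in the high-degree coefficients is precisely the unproven core of your argument. (This observation also shows that the paper's own operational proof is not airtight: it never invokes the channel orderings, yet the conclusion fails without them, because a larger $p$ also triggers more broadcasts and hence more non-preemptable relay locks, which is harmful when the $R$-$D$ link is weaker than the $S$-$D$ link. Your analytic route, if completed under the stated orderings, would therefore be the more rigorous of the two; as submitted, however, it proves nothing.)
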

\begin{proof}
	Recall that in the RP-AoR protocol, the system can have three transmission operations in each time slot, i.e.,  $\mathbf{O}_{\mathbf{S}}$,  $\mathbf{O}_{\mathbf{R}}$ and  $\mathbf{O}_{\mathbf{N}}$. If the system decides to transmit with $\mathbf{O}_{\mathbf{S}}$ in one time slot, generating new status update at $S$ in each slot (i.e., $p_{RP}^*=1$) ensures that the transmitted update is always the freshest. If the system chooses $\mathbf{O}_{\mathbf{R}}$ in one time slot, the forwarded update by $R$ will not be preempted by new status updates generated at $S$. In this case, $p_{RP}^*=1$ keeps the status update waiting for transmission at $S$ as fresh as possible. Furthermore, when the generation probability at $S$ is 1, $\mathbf{O}_{\mathbf{N}}$ will never be adopted by the system, which is apparently beneficial for the AoI minimization. This completes the proof.   
\end{proof}

\begin{remark}\label{remark_2}
		For the RP-AoR protocol, the average AoI is minimized by setting $p=1$ (i.e., the generate-at-will model) and it is given by 
		\begin{equation}\label{RP_AoI}
		\bar{\Delta}^{min}_{RP}\!=\!\frac{P_2(1\!-\!P_1)}{P_3(P_2\!+\!P_3\!-\!P_1P_2)}\!+\!\frac{P_2\!+\!P_3\!-\!P_1P_2}{P_3[1\!-\!(1\!-\!P_1)(1\!-\!P_2)]}.
		\end{equation}
		If the considered system adopts the generate-at-will model, by solving the simultaneous equations of \eqref{SP_AoI} and \eqref{RP_AoI}, we have: $(a)$ When $P_1<f(P_2,P_3)$, where $f(P_2,P_3)=\frac{2P_2+P_3+P_2P_3-\sqrt{P_2^2(P_3-2)^2+P_3(8P_2+5P_3-6P_2P_3)}}{4P_2-2}$, the RP-AoR protocol has a better AoI performance than the SP-AoR protocol; $(b)$ When $P_1>f(P_2,P_3)$, the SP-AoR protocol has a better AoI performance that the RP-AoR protocol.	
		
		
\end{remark}

\section{Numerical Results and Discussions}
In this section, we present simulation results to validate the theoretical analysis and the effectiveness of the two proposed AoR protocols in the considered cooperative IoT  system. Each simulation curve presented in the section is averaged over $10^7$ time slots.

We first plot the average AoI of the proposed AoR protocols and the optimal MDP policy against the generation probability $p$, as shown in Fig. \ref{simulation_1}, for different transmission success probabilities (i.e., $P1$, $P2$ and $P_3$). Specifically, we consider five combinations of $(P_1,P_2,P_3)$, which are $(0.2,0.3,0.3)$, $(0.2,0.3,0.8)$, $(0.2,0.8,0.3)$, $(0.2,0.8,0.8)$, and $(0.7,0.8,0.8)$, to represent the systems where all channels suffer from severe fading, both the $S$-$D$ link and the $S$-$R$ link suffer from severe fading while the $R$-$D$ link is in good condition, both the $S$-$D$ link and the $R$-$D$ link suffer from severe fading while the $S$-$R$ link is in good condition, the $S$-$D$ link suffers from severe fading while both the $S$-$R$ link and the $R$-$D$ link are in good condition, and all channels are in good condition, respectively. It can be observed that in all the cases, the simulation results of the average AoI for both the SP-AoR protocol and the RP-AoR protocol are consistent with the corresponding analytical results. This observation verifies the correctness of Theorems \ref{Theorem 1} and \ref{Theorem 3}. We can also observe that in Fig. \ref{simulation_1} (a), the optimal generation probability to minimize the average AoI of the SP-AoR protocol in the case of $(P_1,P_2,P_3)=(0.2,0.3,0.3)$, $(P_1,P_2,P_3)=(0.2,0.8,0.8)$, and $(P_1,P_2,P_3)=(0.7,0.8,0.8)$ are $p=1$, $p=0.616$, and $p=1$, respectively. In Fig. \ref{simulation_1} (b), the optimal generation probability of the SP-AoR protocol in the case of $(P_1,P_2,P_3)=(0.2,0.3,0.8)$ and $(P_1,P_2,P_3)=(0.2,0.8,0.3)$ are $p=0.662$ and $p=0.826$, respectively. The optimal generation probabilities coincide well with the results calculated by the formula in Theorem \ref{Theorem 2}. It is also worth mentioning that the average AoI of the RP-AoR protocol is minimized when $p=1$ in all the cases, which are consistent with Theorem \ref{Theorem 4}.

\begin{figure}
	\centering
	\subfigure[\scriptsize{$(P_1, P_2, P_3)=(0.2,0.3,0.3), (0.2,0.8,0.8)$ and $(0.7,0.8,0.8)$}]{\includegraphics[width=0.85\linewidth]{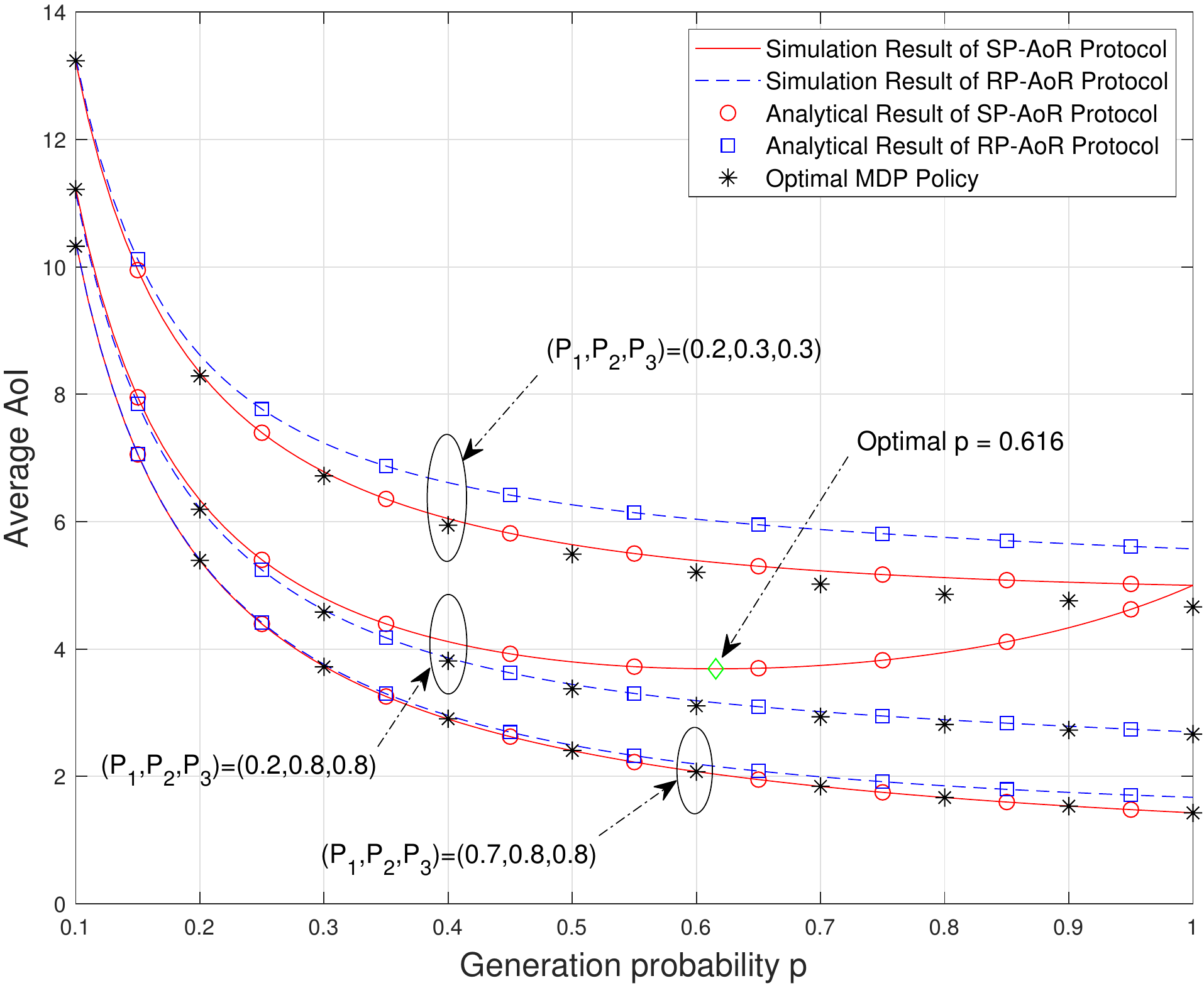}}
	\qquad\qquad
	\subfigure[\scriptsize{$(P_1, P_2, P_3)=(0.2,0.3,0.8)$ and $ (0.2,0.8,0.3)$}]{\includegraphics[width=0.85\linewidth]{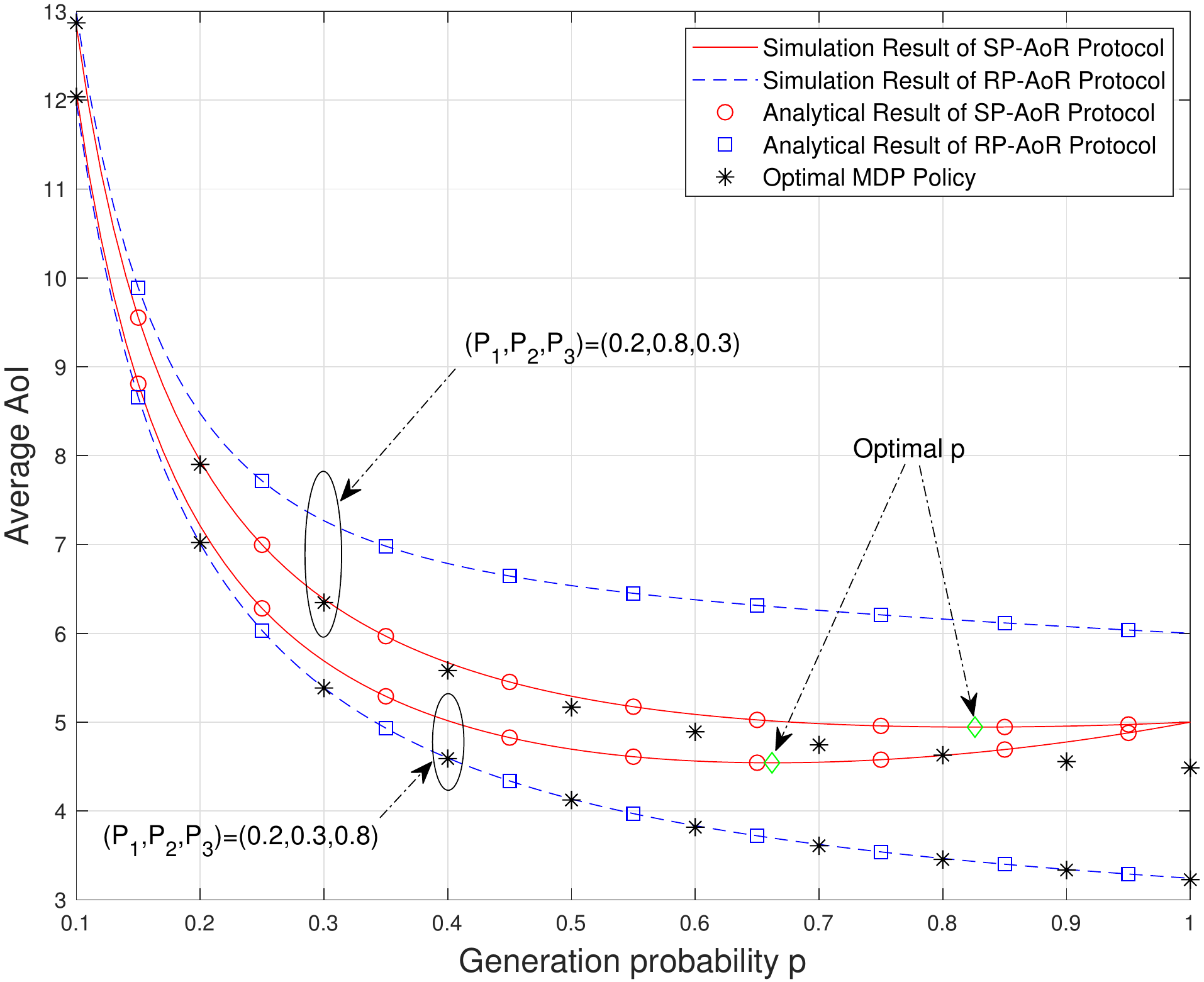}}
	\caption{Average AoI of the SP-AoR protocol, the RP-AoR protocol and the optimal MDP policy against the generation probability $p$ for different transmission success probabilities, i.e., $(P_1, P_2, P_3)$.}
	\label{simulation_1}
	\vspace{-0.5cm}
\end{figure}

Based on Fig. \ref{simulation_1}, we can see that the SP-AoR protocol and the RP-AoR protocol outperform each other in different cases. Note that in the considered cooperative IoT system, we can quickly determine which protocol to apply for better AoI performance with the help of the analytical results. More importantly, the protocol with better performance can achieve near-optimal performance compared with the optimal scheduling policy solved by the MDP tool. Specifically, the RP-AoR protocol has better AoI performance than the SP-AoR protocol when $(P_1, P_2,P_3)=(0.2,0.8,0.8)$ and $(P_1,P_2,P_3)=(0.2,0.3,0.8)$, as shown in Fig. \ref{simulation_1} (a) and Fig. \ref{simulation_1} (b), respectively. Also, the performance of the RP-AoR protocol is  almost the same as that of the optimal MDP policy in both cases. This is because, in both cases, the $S$-$D$ link suffers from severe channel fading while the $R$-$D$ link has good channel condition. This means that it is difficult to successfully transmit status updates to $D$ through the $S$-$D$ link, while the status updates forwarded by $R$ have a high probability of being received by $D$. In such cases, even if a new status update is generated at $S$, the system should give the priority to  $R$ for transmitting, which is consistent with the principle of the RP-AoR protocol. We can also observe that the SP-AoR protocol results in lower average AoI than the RP-AoR protocol in the case of $(P_1, P_2,P_3)=(0.2,0.3,0.3)$ and $(P_1, P_2,P_3)=(0.2,0.8,0.3)$, as shown in Fig. \ref{simulation_1} (a) and Fig. \ref{simulation_1} (b), respectively. The rationale behind this is that when neither $S$-$D$ link nor $R$-$D$ link has good channel condition, it is difficult for both $S$ and $R$ to successfully transmit status updates to $D$. Therefore, the SP-AoR protocol, in which $S$ has the priority to transmit newly generated status updates, is more beneficial for reducing the AoI at $D$ in such cases. In these two cases, although there is a performance gap between the SP-AoR protocol and the optimal MDP policy when the generation probability $p$ is large, the SP-AoR can still achieve near-optimal performance. In addition, as shown in Fig. \ref{simulation_1} (a), the SP-AoR protocol has a better AoI performance than the RP-AoR protocol when $(P_1,P_2,P_3)=(0.7,0.8,0.8)$, and the performance almost matches that of the optimal MDP policy. 


\begin{figure}
	\centering
	\includegraphics[width=0.85\linewidth]{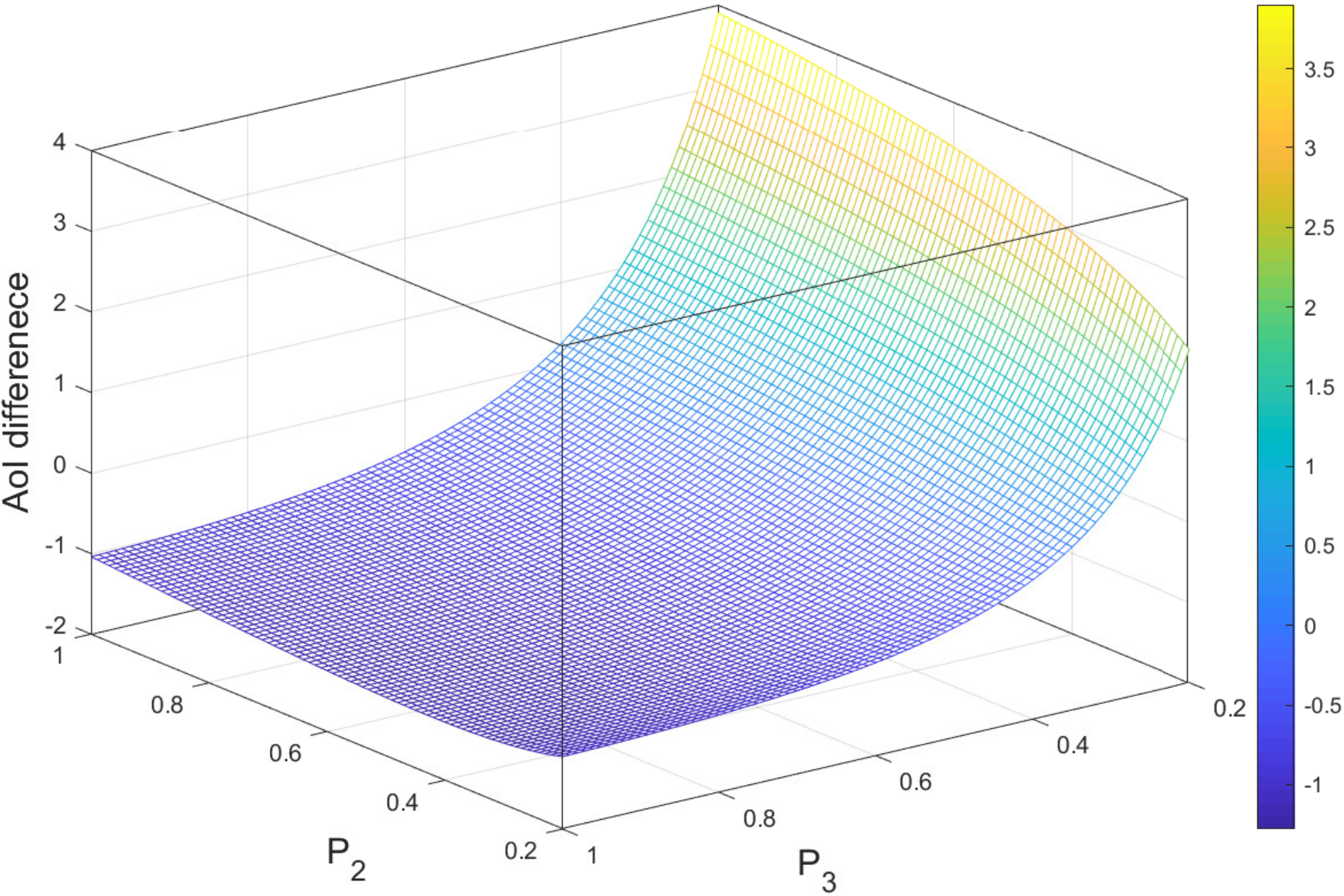}
	\caption{AoI difference between the RP-AoR protocol and the SP-AoR protocol versus the transmission success probabilities $P_2$ and $P_3$ when $P_1=0.2$ and $p=0.8$.}
	\label{simulation_2}
	\vspace{-1cm}
\end{figure}

Fig. \ref{simulation_2} shows the AoI difference between the RP-AoR protocol and the SP-AoR protocol against the transmission success probabilities of the $S$-$R$ link and the $R$-$D$ link (i.e., $P_2$ and $P_3$) when $P_1=0.2$ and $p=0.8$. In the figure, if the AoI difference is less than 0, the RP-AoR protocol has a lower average AoI. Otherwise, the average AoI of the SP-AoR protocol is lower. It can be observed that when $P_3$ is greater than about 0.45, the RP-AoR protocol always has a better AoI performance. The rationale behind this is similar to what we explained in the case of $(P_1,P_2,P_3)=(0.2,0.8,0.8)$ and $(0.2,0.3,0.8)$ in Fig. \ref{simulation_1}. When $P_1=0.2$ and $P_3>0.45$, it is difficult to have successful status update transmission through the $S$-$D$ link, while the $R$-$D$ link has a higher probability of successfully forwarding status updates to $D$. Therefore, in this case, the updates successfully delivered to $D$ by $R$ have a significant contribution on minimizing the average AoI. Since it is assumed that $p=0.8$ (i.e., the generation of status updates at $S$ is frequent), in the SP-AoR protocol, the updates at $R$ will always be preempted by $S$ and cannot be transmitted to $D$ to effectively reduce the AoI. However, the RP-AoR protocol will not be affected by the frequent status update generation, which results in a better AoI performance than the SP-AoR protocol. 


We also investigate the average AoI of both proposed AoR protocols against the transmission success probability of the $S$-$D$ link (i.e., $P_1$) when the considered system adopts the generate-at-will model (i.e., $p=1$). As depicted in Fig. \ref{simulation_3}, we consider two cases with $P_2=P_3=0.3$ and $P_2=P_3=0.8$, respectively. Based on Remark \ref{remark1}, the average AoI of the SP-AoR protocol is only related to $P_1$ in the generate-at-will model. Thus, there is only one average AoI curve for the SP-AoR protocol in Fig. \ref{simulation_3}. It can be observed that in both cases, the RP-AoR curve has an intersection with the SP-AoR curve. Specifically, when $(P_2,P_3)=(0.3,0.3)$ and $(P_2,P_3)=(0.8,0.8)$, the intersections occur at $P_1=0.1701$ and $P_1=0.4624$, respectively, which are consistent with the results calculated by using the formula in Remark \ref{remark_2}. This indicates that if the considered system has a new status update at the beginning of each time slot, we can quickly determine which of the SP-AoR protocol and the RP-AoR protocol is better for reducing the average AoI of the considered system by simply comparing the values of $P_1$ and $f(P_2,P_3)$.

\begin{figure}
	\centering
	\includegraphics[width=0.85\linewidth]{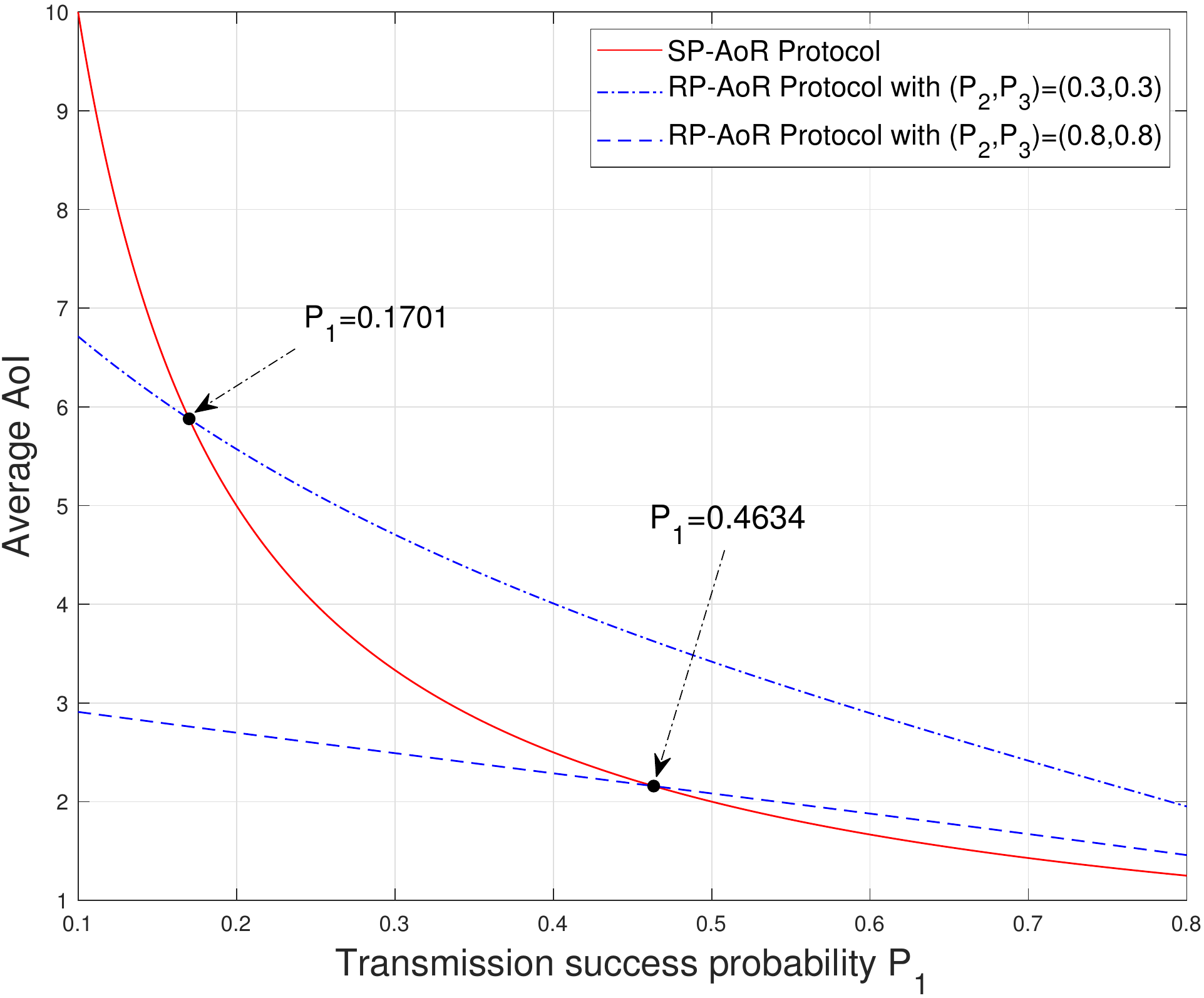}
	\caption{Average AoI of the SP-AoR protocol and the RP-AoR protocol versus the transmission success probability $P_1$ for different $(P_2, P_3)$ when the considered system adopts the generate-at-will model (i.e., $p=1$).}
	\label{simulation_3}
\end{figure}


\section{Conclusions}
In this paper, we considered a cooperative IoT system, where the source timely reports randomly generated status updates to the destination with the help of a relay. 
Considering the limited capabilities of IoT devices,
we devised two low-complexity AoR protocols, i.e., the SP-AoR protocol and the RP-AoR protocol, to reduce the AoI of the considered system. By analyzing the evolution of the instantaneous AoI, we derived the closed-form expressions of the average AoI for both proposed AoR protocols, which are functions of the generation probability of the status updates and the transmission success probability of each link. Based on the closed-form expressions, we further figured out the optimal generation probability of the status updates that minimizes the average AoI in both protocols. Simulation results validated our theoretical analysis, and demonstrated that the proposed protocols outperform each other under various system parameters. Note that based on the analytical results, we could quickly determine which protocol to apply for better AoI performance in the considered system. Furthermore, it was shown that the protocol with better performance can achieve near-optimal performance compared with the optimal scheduling policy solved by the MDP tool.

\begin{appendices}

\section{Proof of Theorem \ref{Theorem 2}} \label{prof3}

Recall that $\bar{\Delta}_{SP}=\frac{[1-(1-p)(1-P_{3})]\cdot [1-(1-p)(1-P_{1})(1-P_{2})]}{p[pP_{1}+(1-p)P_{3}-(1-p)(1-P_{1})(1-P_{2})P_{3}]}$. To proceed, we derive the first-order derivative of $\bar\Delta_{SP}$ with respect to (w.r.t.) $p$. After some algebra manipulations, we have
\begin{equation}\label{derisp}
\frac{\partial \bar{\Delta}_{SP}}{\partial p}=\frac{\mu \cdot p^2+ \lambda \cdot p+\xi}{p^2\!\cdot\!\big[pP_{1}\!+\!(1\!-\!p)P_{3}\!-\!(1\!-\!p)(1\!-\!P_{1})(1\!-\!P_{2}P_{3})\big]^2},
\end{equation}
where
\begin{subequations}
	\begin{equation}\label{proof3xi}
	\resizebox{.99\hsize}{!}{$\begin{aligned}
		\mu\!=
		\!-&(P_{1}^2P_{2}^2P_{3}^2\!-\!2P_{1}^2P_{2}P_{3}^2\!+\!2P_{1}^2P_{2}P_{3}\!-\!P_{1}^2P_{2}\!+\!P_{1}^2P_{3}^2\!
		-\!2P_{1}^2P_{3}\!\\	
		&\,+\!P_{1}^2\!
		-\!2P_{1}P_{2}^2P_{3}^2\!+\!2P_{1}P_{2}P_{3}^2\!-\!P_{1}P_{2}P_{3}\!+\!P_{1}P_{2}\!+\!P_{2}^2P_{3}^2\!-\!P_{2}P_{3}),
		\end{aligned}$}
	\end{equation}
	\begin{equation}\label{proof3psi}
	\resizebox{.91\hsize}{!}{$\begin{aligned}
		\lambda\!=\, &2P_{1}^2P_{2}^2P_{3}^2\!-\!4P_{1}^2P_{2}P_{3}^2\!+\!2P_{1}^2P_{2}P_{3}\!+\!2P_{1}^2P_{3}^2\!
		-\!2P_{1}^2P_{3}\!\\
		&\qquad\qquad\qquad-\!4P_{1}P_{2}^2P_{3}^2\!+\!4P_{1}P_{2}P_{3}^2\!-\!2P_{1}P_{2}P_{3}\!+\!2P_{2}^2P_{3}^2,
		\end{aligned}$}
	\end{equation}
	\begin{equation}\label{proof3omega}
	\resizebox{.99\hsize}{!}{$\begin{aligned}
		\xi\!=\!-(P_{1}^2P_{2}^2P_{3}^2\!-\!2P_{1}^2P_{2}P_{3}^2\!+\!P_{1}^2P_{3}^2
		\!-\!2P_{1}P_{2}^2P_{3}^2\!+\!2P_{1}P_{2}P_{3}^2\!+\!P_{2}^2P_{3}^2),
		\end{aligned}$}
	\end{equation}
\end{subequations}
are defined for the notation simplicity.

After a careful observation on the right hand side (RHS) of \eqref{derisp}, we can deduce that the sign of $\frac{\partial \bar{\Delta}_{SP}}{\partial p}$ is only determined by the numerator
\begin{equation}
\kappa(p) = \mu \!\cdot\! p^2+\lambda \!\cdot\! p+\xi,
\end{equation}
since the denominator is always large than zero. To determine the monotonicity of the function $\bar\Delta_{SP}$, we need to investigate the properties of the quadratic function $\kappa(p)$ on the feasible set of $p$ $\left(\mathrm{i.e.,}\  (0, 1]\right)$. 

Firstly, after some algebra manipulations, we note that
\begin{subequations}
	\begin{equation}
	\xi=-P_{3}^2\Big[P_{1}^2(P_{2}-1)^2+P_{2}\big(P_{2}+2P_{1}(1-P_{2})\big)\Big],
	\end{equation}
	\begin{equation}
	\lambda^2-4\mu \xi=4P_{3}^2P_{2}(P_{3}-P_{1})(1-P_{1})(P_{1}+P_{2}-P_{1}P_{2})^2.
	\end{equation}
\end{subequations}
As $0<P_{1}<P_{2}<1$ and $0<P_{1}<P_{3}<1$, we have that $\xi<0$ and $\lambda^2-4\mu \xi>0$. This means that the curve of $\kappa(p)$ always intersects the Y-axis at the negative half of the Y-axis and it always has two roots on the X-axis, which can be given by
\begin{subequations}
	\begin{equation}
	x_{1}=\frac{-\lambda -\sqrt{\lambda ^2-4\mu  \xi}}{2\mu},
	\end{equation}
	\begin{equation}\label{proof3x2}
	x_{2}=\frac{-\lambda +\sqrt{\lambda ^2-4\mu  \xi}}{2\mu},
	\end{equation}
\end{subequations}
where $x_{2}>x_{1}$. 

To further characterize the shapes of the function $\kappa(p)$, we now investigate $\mu$ and $\lambda$ for the following four possible cases:

1) When $\mu>0$ and $\lambda>0$, we always have $x_{1}<0$ and $x_{2}>0$. Since $p$ is in the range of $(0, 1]$, we have two sub-cases: (a) $x_{2}<1$ and (b) $x_{2}>1$. We draw the possible shapes of $\kappa(p)$ versus $p$ for the sub-case (a) and sub-case (b) in Fig. \ref{appendix1} (a) and Fig. \ref{appendix1} (b), respectively.

From Fig. \ref{appendix1} (a), we can see that in the sub-case (a), $\kappa(p)<0$ holds for $p\in(0, x_{2}]$, and $\kappa(p)>0$ holds for $p\in(x_{2}, 1]$. This means that the function $\bar\Delta$ is decreasing for $p\in(0, x_{2}]$ and is increasing for $p\in(x_{2}, 1]$. As a result, the minimum value of $\bar\Delta$ is achieved at $p=x_{2}$.

From Fig. \ref{appendix1} (b), we can see that in the sub-case (b), $\kappa(p)<0$ holds for $p\in(0, 1]$. Therefore, the function $\bar\Delta$ is always decreasing for $p\in(0, 1]$, which indicates that the value of $\bar\Delta$ is minimized at $p=1$.


2) When $\mu>0$ and $\lambda<0$, we always have $x_{1}<0$ and $x_{2}>0$. Since $p$ is in the range of $(0,1]$, we also have two sub-cases: (a) $x_{2}<1$ and (b) $x_{2}>1$. The possible shapes of $\kappa(p)$ versus $p$ for the sub-case (a) and sub-case (b) are depicted as Fig. \ref{appendix2} (a) and Fig. \ref{appendix2} (b), respectively. 

Similar to case 1), we can prove that the minimum value of $\bar\Delta$ is achieved by setting $p=x_{2}$ and $p=1$ for the case $x_{2}<1$ and $x_{2}>1$, respectively.

3) When $\mu<0$ and $\lambda<0$, we always have $x_{1}<x_{2}<0$. Therefore, there only exists one possible shape of $\kappa(p)$ versus $p$ in this case, which is depicted as Fig. \ref{appendix3}. 

From Fig. \ref{appendix3}, it is straightforward to see that $\kappa(0)<0$ holds for $p\in (0, 1]$. This means that the function $\bar\Delta$ is always decreasing for $p\in (0, 1]$ in this case. As a result, the minimum value of $\bar\Delta$ is achieved at $p=1$.

4) When $\mu<0$ and $\lambda>0$, we always have $x_{2}>x_{1}>0$. Therefore, the possible shapes of $\kappa(p)$ versus $p$ can be depicted as Fig. \ref{appendix4}. According to \eqref{proof3psi}, after some algebra manipulations, $\lambda$ can be rewritten as 
\begin{equation}
\lambda\!=-2P_{3}(P_{1}\!+\!P_{2}\!-\!P_{1}P_{2})(P_{1}\!-\!P_{1}P_{3}\!-\!P_{2}P_{3}\!+\!P_{1}P_{2}P_{3}).
\end{equation}
As $\lambda>0$, $0<P_{1}<P_{2}<1$ and $0<P_{1}<P_{3}<1$, we thus have $P_{1}-P_{1}P_{3}-P_{2}P_{3}+P_{1}P_{2}P_{3}<0$, which can be rewritten as
\begin{equation}\label{proof3P1}
P_{1}<\frac{P_{2}P_{3}}{1-P_{3}+P_{2}P_{3}}. 
\end{equation}

To further validate this case, we investigate the value of $-\mu$ under the constraint of \eqref{proof3P1}. We rewrite $-\mu$ as a quadratic function of $P_{1}$, which can be expressed as 
\begin{equation}
-\mu(P_{1})=\psi P_{1}^2+\chi P_{1}+\omega,
\end{equation}
where
\begin{subequations}
	\begin{equation}
	\psi = P_{2}^2P_{3}^2-2P_{2}P_{3}^2+2P_{2}P_{3}-P_{2}+P_{3}^2-2P_{3}+1,
	\end{equation}
	\begin{equation}\label{proof3miu}
	\chi=2P_{2}P_{3}^2(1-P_{2})+P_{2}(1-P_{3}),
	\end{equation}
	\begin{equation}\label{proof3unknown}
	\omega= P_{2}P_{3}(P_{2}P_{3}-1),
	\end{equation}
\end{subequations}
are defined for the simplicity of notations.

As $0<P_{1}<P_{2}<1$ and $0<P_{1}<P_{3}<1$, according to \eqref{proof3miu} and \eqref{proof3unknown}, we have $\chi>0$ and $\omega<0$. In addition, if we set $P_{1}=1$, after some manipulations, we can attain that $-\mu({P_{1}})=(P_{3}-1)^2>0$. Similarly, if we substitute $P_{1}=\frac{P_{2}P_{3}}{1-P_{3}+P_{2}P_{3}}$ into the expression of $-\mu(P_{1})$, we can have $-\mu (P_{1})=\frac{P_{2}P_{3}(P_{2}-1)(P_{3}-1)^2}{(P_{2}P_{3}-P_{3}+1)^2}<0$. Based on the above analysis, we can draw the possible shapes of $-\mu(P_{1})$ versus $P_{1}$ for $\psi>0$ and $\psi<0$ in Fig. \ref{appendix5} (a) and Fig. \ref{appendix5} (b), respectively. Note that in these figures, we also show the possible positions of the point with X-coordinate being $\frac{P_{2}P_{3}}{1-P_{3}+P_{2}P_{3}}$.

From Fig. \ref{appendix5} (a) and Fig. \ref{appendix5} (b), we can see that if $P_{1}<\frac{P_{2}P_{3}}{1-P_{3}+P_{2}P_{3}}$ and $0<P_{1}<1$,  $-\mu<0$ always holds. This means that we always have $\mu>0$ if $\lambda>0$. This is contrary to the initial assumptions of this case (i.e., $\mu<0$ and $\lambda>0$). Therefore, the case 4) is not valid.

In conclusion, we can find that in all of the possible case 1), case 2) and case 3), the minimum value of $\bar\Delta$ is achieved by setting $p=x_{2}$ and $p=1$ when $x_{2}<1$ and $x_{2}>1$, respectively. Therefore, based on \eqref{proof3xi}, \eqref{proof3psi}, \eqref{proof3omega} and \eqref{proof3x2}, after some algebra manipulations, the optimal $p$ that minimizes $\bar\Delta$ can be given by \eqref{psp}. This completes the proof.

\begin{figure}[htbp]
	\centering
	\subfigure[When $x_{2}<1$]{\includegraphics[height=2.6cm]{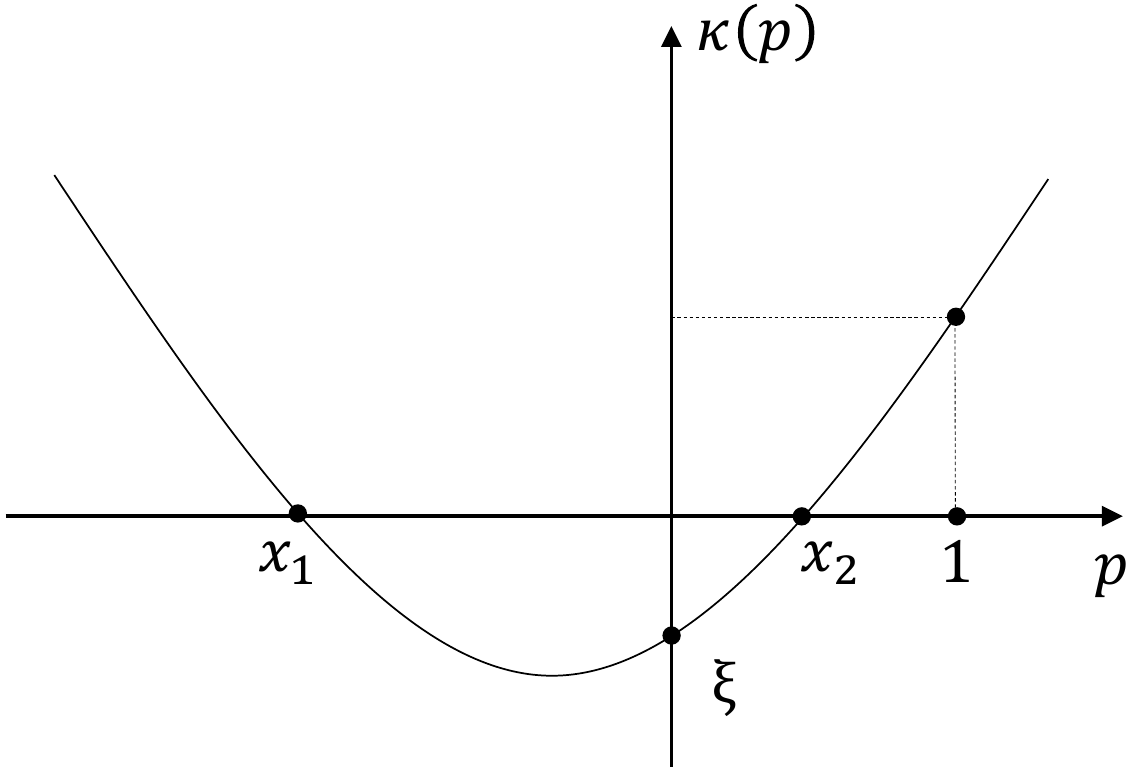}}
	\qquad
	\subfigure[When $x_{2}>1$]{\includegraphics[height=2.6cm]{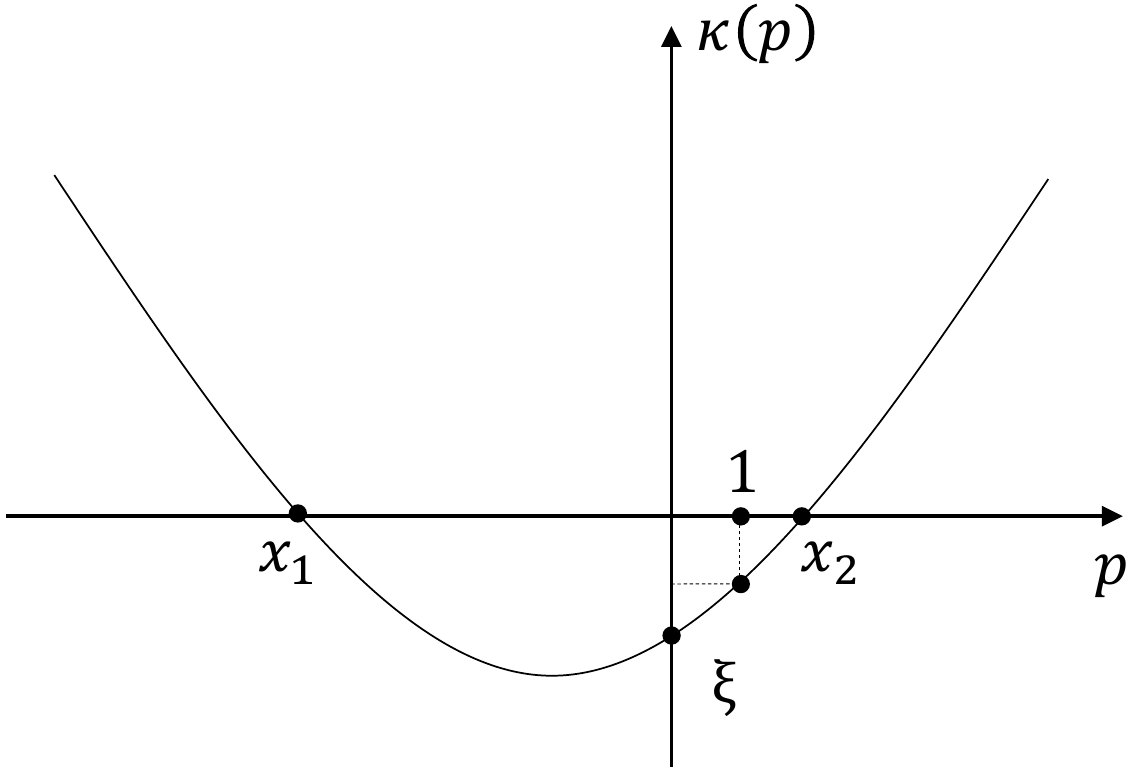}}
	\caption{Possible shapes for the function $\kappa(p)$ versus $p$ when $\mu>0$ and $\lambda>0$.}
	\label{appendix1}
\end{figure}

\begin{figure}[htbp]
	\centering
	\subfigure[When $x_{2}<1$]{\includegraphics[height=2.6cm]{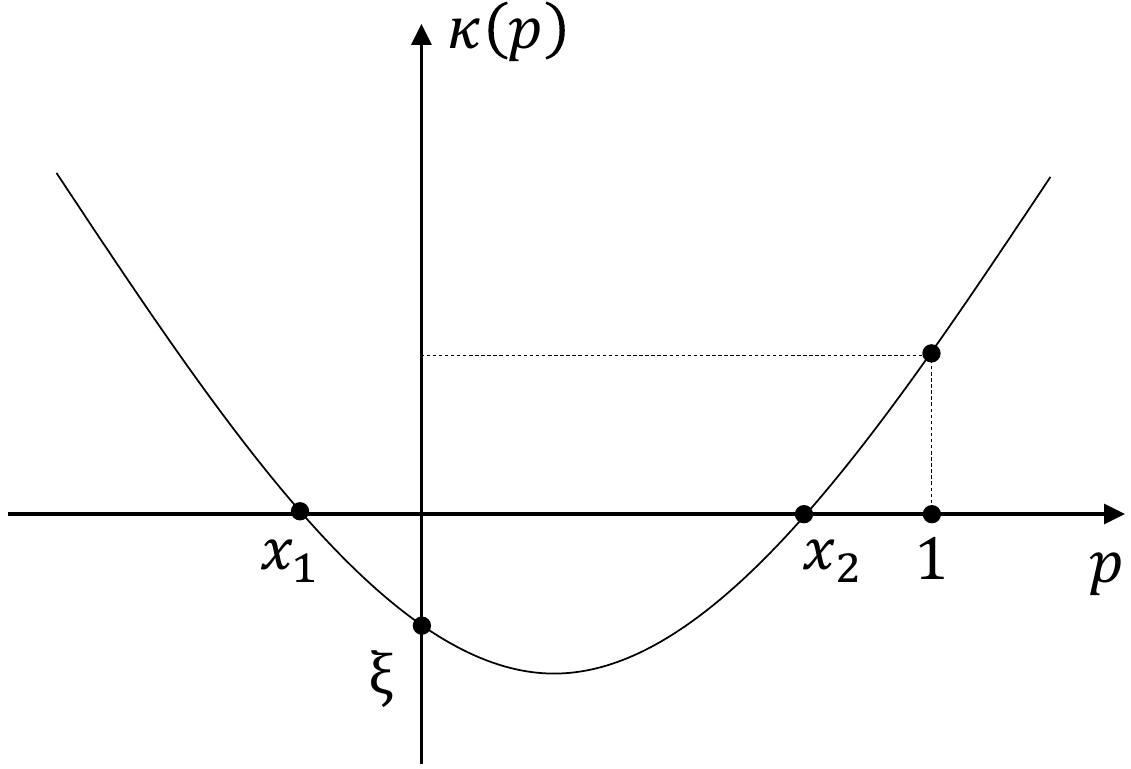}}
	\qquad
	\subfigure[When $x_{2}>1$]{\includegraphics[height=2.6cm]{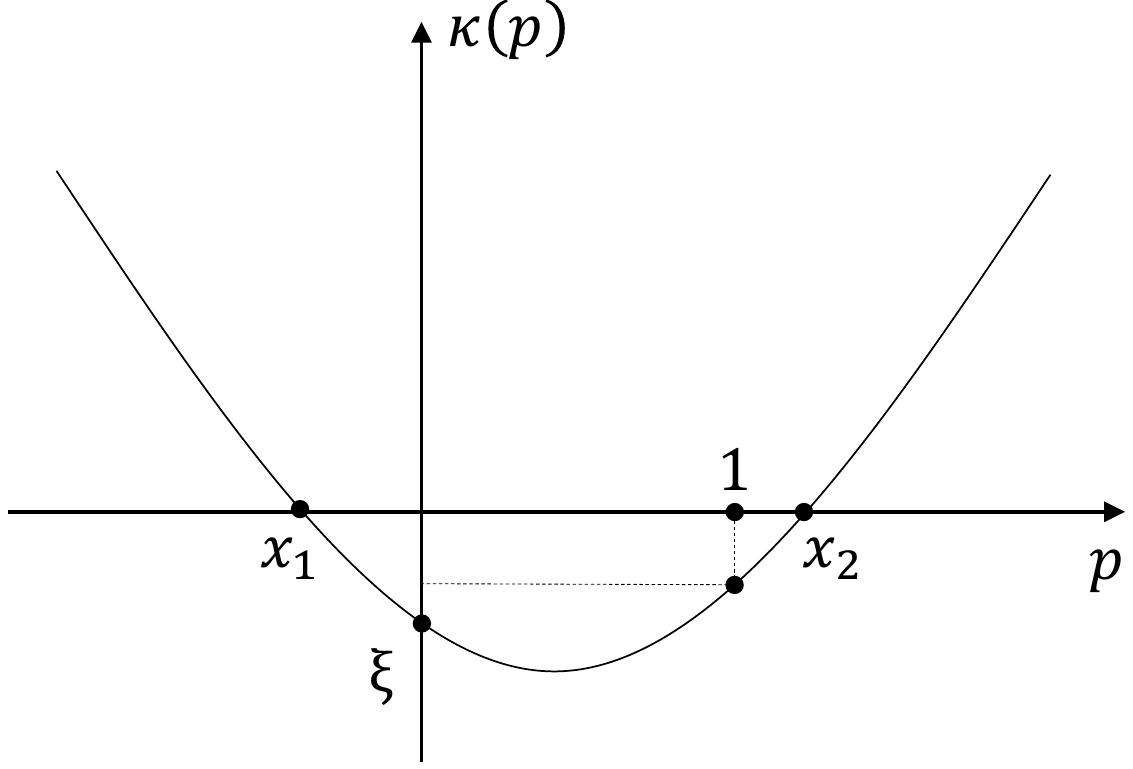}}
	\caption{Possible shapes for the function $\kappa(p)$ versus $p$ when $\mu>0$ and $\lambda<0$.}
	\label{appendix2}
\end{figure}

\begin{figure}[tbp]
	\centering{\includegraphics[height=2.8cm]{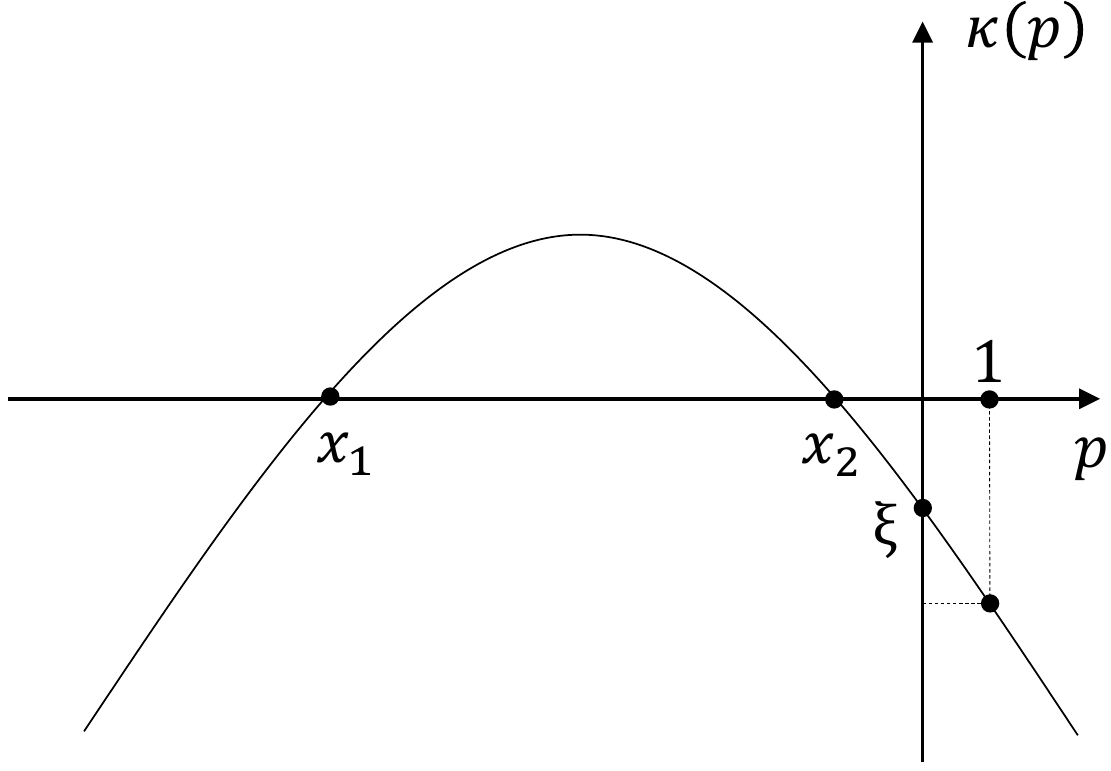}}
	\vspace{-0.7 em}
	\caption{Possible shapes for the function $\kappa(p)$ versus $p$ when $\mu<0$ and $\lambda<0$.}
	\label{appendix3}
\end{figure}

\begin{figure}[htbp]
	\centering
	\subfigure[When $x_{2}<1$]{\includegraphics[height=2.6cm]{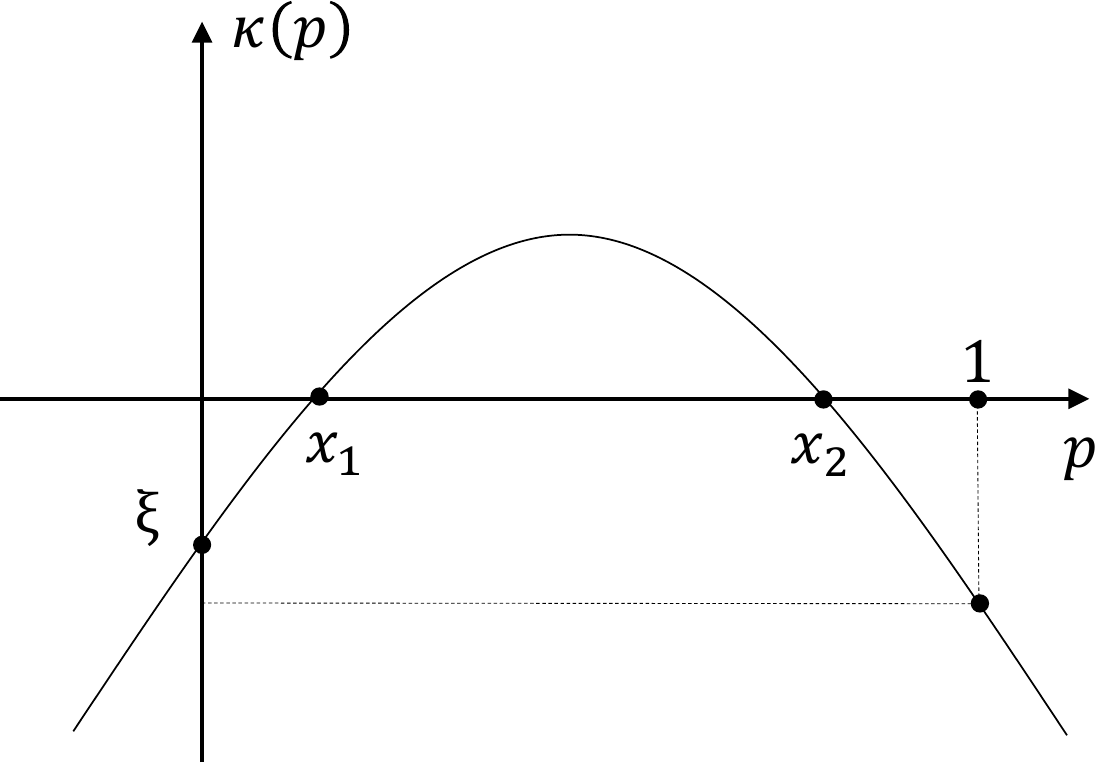}}
	\qquad
	\subfigure[When $x_{2}>1$]{\includegraphics[height=2.6cm]{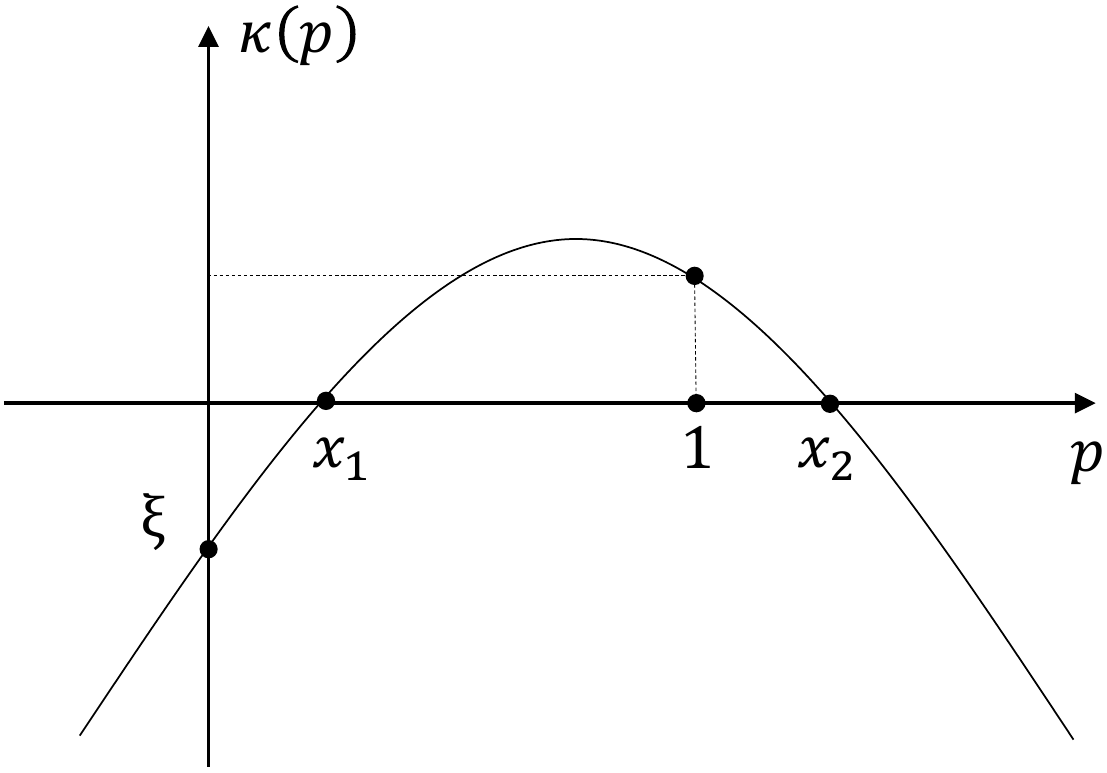}}
	\caption{Possible shapes for the function $\kappa(p)$ versus $p$ when $\mu<0$ and $\lambda>0$.}
	\label{appendix4}
\end{figure}

\begin{figure}[htbp]
	\centering
	\subfigure[When $\psi>0$]{\includegraphics[height=2.6cm]{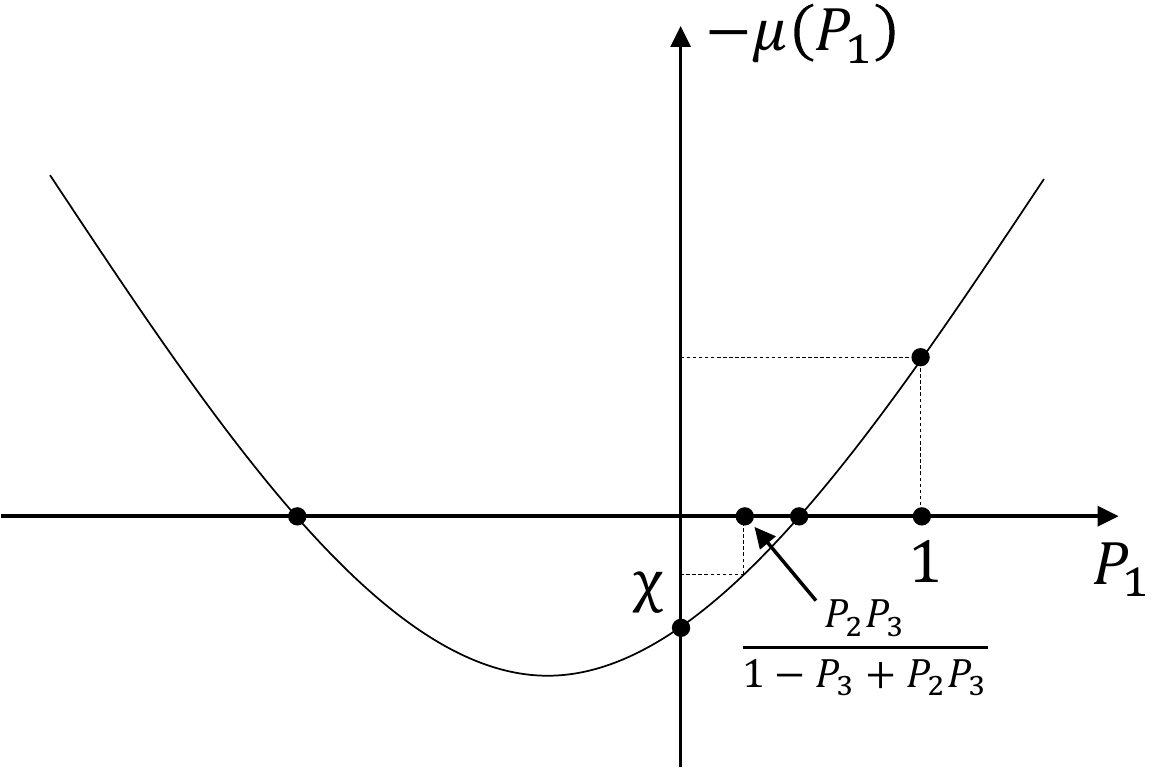}}
	\qquad
	\subfigure[When $\psi<0$]{\includegraphics[height=2.6cm]{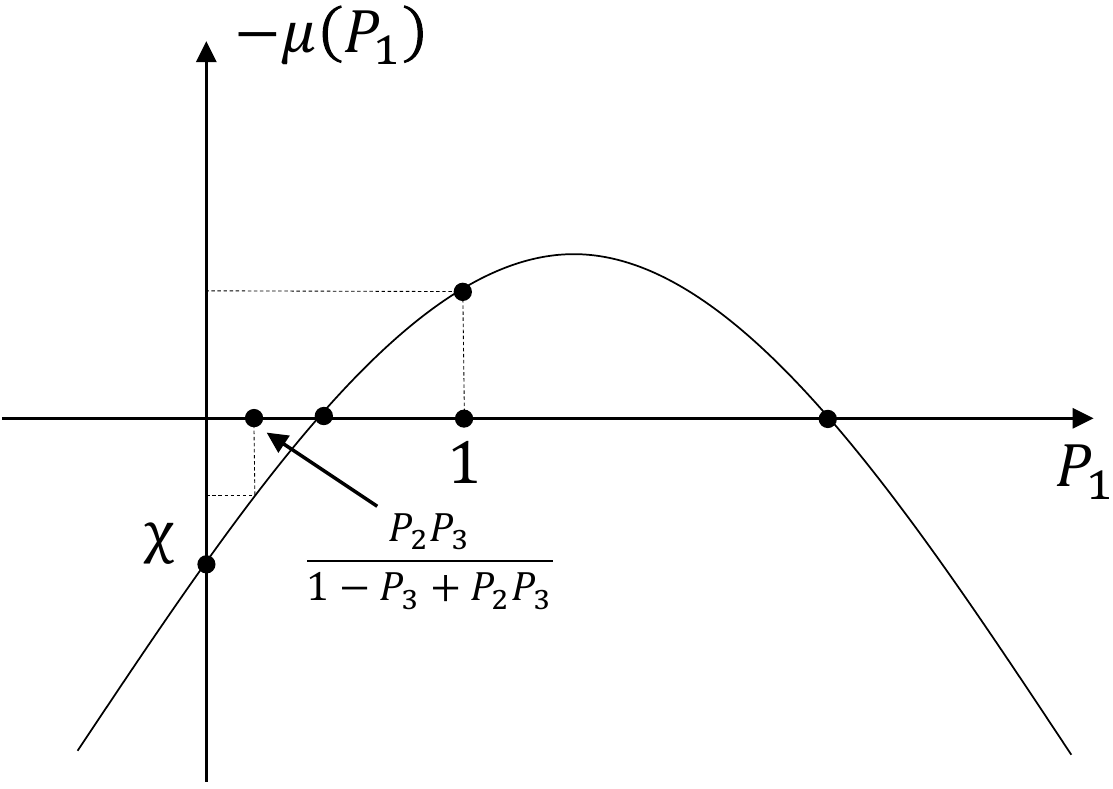}}
	\caption{Possible shapes for the function $-\mu(P_{1})$ versus $P_{1}$.}
	\label{appendix5}
\end{figure}

\end{appendices}

\bibliography{References}
\bibliographystyle{IEEEtran}

\end{document}